%% file: main.tex
\pdfoutput=1

\documentclass{ceurart}
\usepackage{bookmark}
\usepackage{hyperref}
\input{settings.tex}
\input{macros}

\makeatletter
\def\desclabel#1#2{\begingroup
\def\@currentlabel{#1}%
#1\label{#2}\endgroup
}
\makeatother

\begin{document}

\input{paper-info.tex}

\begin{abstract}
The literature on ontology-mediated query answering (OMQA) has been shaped by two key results: first-order rewritability for DL-Lite, and PTime-hardness of data complexity for essentially every description logic beyond~it. This has effectively positioned DL-Lite as the only practical choice for query rewriting, restricting OMQA solutions to first-order queries and ontologies that can be rewritten into them. This AC$_0$ vs.\,PTime dichotomy is especially limiting if we consider that OMQA targets graph-structured data, and that standard graph query languages (including the recent ISO standards GQL and SQL/PGQ) are typically NL-complete.  
Towards identifying a rich Horn DL that can be rewritten into graph query languages and that can still express many $\mathcal{ELI}$ and DL-Lite ontologies, 
we 
introduce a stratification mechanism for $\mathcal{ELI}$ that controls the interaction between conjunction and recursion. In this way, we obtain~$\ourEL$, a~description logic that strictly extends the core DL-Lite, supports reachability axioms and restricted conjunction, and allows for reasoning in~NL. We establish the NL upper bound via a rewriting into nested two-way regular path queries, a fragment of~GQL,  providing initial evidence that our ontology language is a promising candidate for extending OMQA to graph query languages.
\end{abstract}

\begin{keywords}
 complexity \sep
 query rewriting \sep
 data complexity \sep 
 lightweight description logics
\end{keywords}

\maketitle

\input{sections/01-introduction}

\input{sections/02-preliminaries}

\input{sections/03-ourlogic}
\input{sections/04-soundness}
\input{sections/05-completeness}
\input{sections/06-combined-complexity}
\input{sections/07-conclusions}

\begin{acknowledgments}
This research was funded in whole by the Austrian Science Fund (FWF) 10.55776/PIN8884924.
\end{acknowledgments}

\section*{Declaration on Generative AI}
  During the preparation of this work, the author(s) 
  used ChatGPT and Claude in order to: grammar and spelling check, polish the text. 
  After using these tool(s)/service(s), the author(s) reviewed and edited 
  the content as needed and take(s) full responsibility for the publication’s content.

\bibliography{references}

\appendix
\input{sections/appendix-soundness}
\input{sections/appendix-completeness}
\input{sections/appendix-pspace}

\end{document}

%% file: settings.tex
\usepackage{listings}
\usepackage{graphbox}
\DeclareRobustCommand{\DLLogo}{%
  \begingroup\normalfont
  \kern-1.75pt\includegraphics[align=c,height=1.25\baselineskip]{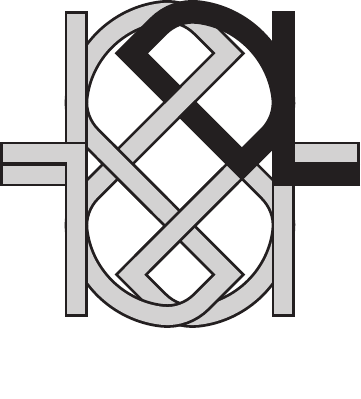}\kern-1.5pt%
  \endgroup
}

\usepackage{url}
\usepackage[utf8]{inputenc}
\usepackage[T1]{fontenc}
\usepackage{graphicx}
\usepackage{amsmath}
\usepackage{amssymb}
\usepackage{amsthm,bm,bbm}
\usepackage{booktabs}
\usepackage{algorithm}
\usepackage{algorithmic}
\usepackage[dvipsnames]{xcolor} 
\usepackage{todonotes}
\usepackage{stmaryrd}
\usepackage{xspace}
\usepackage{paralist} 
\usepackage[mathscr]{euscript}
\usepackage{enumitem}
\usepackage{subcaption}
\usepackage{tikz} \usetikzlibrary{automata, arrows.meta, positioning,
  shapes, fadings}
\usepackage{textcomp}
\usepackage{mathtools}

\usepackage{microtype} 
\usepackage{ellipsis} 
\usepackage[abbreviations,british]{foreign} 
\newcommand{\myqed}{\hfill{$\blacktriangleleft$}}

\DeclareMathAlphabet{\mathdutchcal}{U}{dutchcal}{m}{n}
\SetMathAlphabet{\mathdutchcal}{bold}{U}{dutchcal}{b}{n}
\DeclareMathAlphabet{\mathdutchbcal}{U}{dutchcal}{b}{n}

%% file: macros.tex
\newtheorem{df}{Definition}[section]

\newtheorem{example}[df]{Example}
\newtheorem{theorem}[df]{Theorem}

\newtheorem{lemma}[df]{Lemma}

\newtheorem{cl}[df]{Claim}
\newtheorem{definition}[df]{Definition}
\newtheorem{corollary}[df]{Corollary}

\definecolor{ao(english)}{rgb}{0.0, 0.5, 0.0}

\definecolor{cornflowerblue}{rgb}{0.39, 0.58, 0.93}

\definecolor{brickred}{rgb}{0.8, 0.25, 0.33}

\newcommand{\complexityclass}[1]{\textsc{#1}} 
\newcommand{\NL}{\complexityclass{NL}} 
\newcommand{\PSpace}{\complexityclass{PSpace}} 
\newcommand{\ACzero}{\complexityclass{AC}_0} %

\newcommand{\PTime}{\complexityclass{PTime}} 

\newcommand{\deff}{\coloneqq} 
\newcommand{\N}{\mathbb{N}} 
\newcommand{\rstr}[1]{\hspace{-.2ex}{\downharpoonright}_{#1}} 

\newcommand{\someL}{\mathcal{L}}
\newcommand{\EL}{\mathcal{EL}}

\newcommand{\ELI}{\mathcal{ELI}}
\newcommand{\ELIHbot}{\mathcal{ELIH}^{\bot}}
\newcommand{\ELIbot}{\mathcal{ELI}^{\bot}}
\newcommand{\ourEL}{\mathcal{ELI}^{\bot}_{\preceq}}
\newcommand{\DLLite}{\mathcal{DL}_{\mathrm{lite}}}

\newcommand{\NTwoRPQ}{\mathrm{N2RPQ}}
\newcommand{\IQ}{\mathrm{IQ}}

\newcommand{\lang}[1]{\mathbf{#1}}  
\newcommand{\Ilang}{\lang{N_I}}     
\newcommand{\Rlang}{\lang{N_R}}     
\newcommand{\Clang}{\lang{N_C}}     

\newcommand{\elihbotconc}{\lang{C}_{\ELIHbot}} 

\newcommand{\elibotconc}{\lang{C}_{\ELIbot}} 

\newcommand{\cA}{\mathrm{A}}       
\newcommand{\cB}{\mathrm{B}}       
\newcommand{\cC}{\mathrm{C}}       
\newcommand{\cD}{\mathrm{D}}       
\newcommand{\cE}{\mathrm{E}}       
\newcommand{\cL}{\mathrm{L}}       
\newcommand{\cX}{\mathrm{X}}       

\newcommand{\rr}{\mathit{r}}           
\newcommand{\rs}{\mathit{s}}           

\newcommand{\ia}{\texttt{a}}       
\newcommand{\ib}{\texttt{b}}       
\newcommand{\ic}{\texttt{c}}       
\newcommand{\id}{\texttt{d}}       

\newcommand{\dd}{\mathrm{d}}                             
\newcommand{\de}{\mathrm{e}}                             

\newcommand{\dland}{\sqcap}

\newcommand{\topconcept}{\top}
\newcommand{\botconcept}{\bot}
\newcommand{\dlsubseteq}{\sqsubseteq}

\makeatletter
\providecommand{\bigdland}{%
  \mathop{%
    \mathpalette\@updown\bigsqcup
  }%
}
\newcommand*{\@updown}[2]{%
  \rotatebox[origin=c]{180}{$\m@th#1#2$}%
}
\makeatother

\newcommand{\kb}[1]{\mathcal{#1}}   
\newcommand{\kbK}{\kb{K}}           
\newcommand{\abox}[1]{\mathcal{#1}} 
\newcommand{\aboxA}{\abox{A}}       
\newcommand{\tbox}[1]{\mathcal{#1}} 
\newcommand{\tboxT}{\tbox{T}}       
\newcommand{\ind}[1]{\mathsf{ind}{(#1)}} 
\newcommand{\indA}{\ind{\aboxA}}    
\newcommand{\indK}{\ind{\kbK}}    
\newcommand{\con}[1]{\mathsf{con}{(#1)}} 
\newcommand{\conK}{\con{\kbK}}    
\newcommand{\rol}[1]{\mathsf{rol}{(#1)}} 
\newcommand{\rolK}{\rol{\kbK}}    

\newcommand{\interI}{\mathcal{I}}         
\newcommand{\DeltaI}{\Delta^{\interI}}              
\newcommand{\cdotI}{\cdot^{\interI}}        

\newcommand{\aA}{\mathdutchbcal{A}} 
\newcommand{\aB}{\mathdutchbcal{B}} 

%% file: paper-info.tex
\copyrightyear{2026}
\copyrightclause{Copyright for this paper by its authors.
  Use permitted under Creative Commons License Attribution 4.0
  International (CC BY 4.0).}

\conference{\DLLogo{} DL 2026: 39th International Workshop on Description 
  Logics, July 17--19, 2026, Lisbon, Portugal}

\title{A Horn extension of DL-Lite 
with NL data complexity 
}

\author[1]{Janos Arpasi}[%
email=janos.arpasi@tuwien.ac.at
]

\author[1,2]{Bartosz Jan Bednarczyk}[%
orcid=0000-0002-8267-7554,
email=bartek@cs.uni.wroc.pl,
url=https://bartoszjanbednarczyk.github.io,
]

\author[1]{Magdalena Ortiz}[%
orcid=0000-0002-2344-9658,
email=magdalena.ortiz@tuwien.ac.at
]

\address[1]{Institute of Logic and Computation, TU Wien}
\address[2]{Computer Science Department, University of Wrocław}

%% file: sections/01-introduction.tex
\section{Introduction}

One of the seminal results in the field of Ontology-Mediated Query Answering 
(OMQA)
is the $\ACzero$ data complexity of $\DLLite$, which facilitates pure query rewritings into First-Order Logic queries \cite{calvanese_tractable_2007}. Specifically, given an ontology-mediated query $(q, \tboxT)$, where $q$ is a conjunctive query and $\tboxT$ is a $\DLLite$-TBox, one can derive an FO-query $q_{\tboxT}$ (specifically, a union of conjunctive queries) such that the answers to $q_{\tboxT}$ and $(q, \tboxT)$ coincide over any ABox $\mathcal{A}$. This result effectively reduces ontology-mediated query evaluation to the evaluation of standard database queries, bypassing the need for specialised reasoning engines at runtime.
Equally influential is the observation that $\DLLite$ is essentially the only major family of description logics (DLs) admitting such FO-rewritings \cite{calvanese_data_2013}. Even relatively inexpressive logics, such as $\mathcal{EL}$, are $\PTime$-complete in data complexity \cite{calvanese_data_2013}. Consequently, they cannot be rewritten into FO, nor into any query language with subpolynomial evaluation complexity. This delineates a long-standing barrier that has shaped much of the research landscape in OMQA for the past two decades.

However, this barrier may have been accepted too readily. There is a range of graph query languages that can express fundamental graph features such as reachability and path navigation, surpassing the expressive power of FO while maintaining data complexity in $\NL$ \cite{DBLP:conf/pods/LibkinMMPV25}. 
These languages have matured significantly over the past decade, culminating in the standardisation of GQL and its SQL-inspired counterpart, SQL/PGQ, in 2023 \cite{DBLP:conf/icdt/FrancisGGLMMMPR23}. 
These languages are obvious candidates for OMQA, making it a true graph data access paradigm and opening up the possibility of finally moving beyond DL-Lite and FO-rewritability.
The foundations of ontology-mediated queries based on graph query language---specifically, conjunctive regular path queries (CRPQs) and their variants--- were laid over a decade ago, and the landscape of their computational complexity was almost fully charted  \cite{bienvenu_regular_2015,BienvenuCOS14,DBLP:conf/ijcai/CalvaneseEO09,DBLP:conf/ijcai/OrtizRS11}. Yet these results never led to the attempted adoption of ontology-mediated graph queries in practical systems. Furthermore, none of the mentioned works addressed the fundamental tension between the limited expressivity of $\DLLite$ and the loss of rewritability in more expressive settings.

Recent advances in graph query languages offer new momentum to overcome this impasse. Finally, state-of-the-art graph query engines (like Neo4j and its query language Cypher \cite{DBLP:conf/icdt/GheerbrantLPR25}) are moving towards full support of the navigational core of GQL and SQL/PGQ,\footnote{\url{https://neo4j.com/docs/cypher-manual/current/appendix/gql-conformance/}} which supports all the extensions of CRPQs considered so far in the OMQA literature, including \emph{nested 2-way conjunctive regular path queries (N2RPQs)}, their most expressive variant \cite{DBLP:conf/pods/FrancisGGLMMMPR23}. 
We now see a promising path towards the ultimate goal of \emph{designing useful, expressive yet practicable ontology-mediated graph query languages} that allow us to leverage existing ontologies in the OWL profiles and to pair them with the flexible navigational features of graph query languages. 

In this paper, we address a specific aspect of this goal that has been shamefully disregarded in the literature, namely the identification of Horn DLs that strictly extend $\DLLite$ and admit rewritings into graph query languages. Largely discouraged by the early negative results---like the mentioned $\PTime$-hardness of 
plain $\EL$, and the same lower bound for any DL supporting qualified existential restrictions in the left-hand side of axioms and inverse roles \cite{calvanese_data_2013}---hardly any research effort has been invested into cutting out meaningful description logics that admit rewritings into CRPQs and their extensions. 
  To our knowledge, there are only two partial exceptions. 
  \emph{Harmless linear $\mathcal{ELHI}$} \cite{dimartino_efficient_2025} disallows the use of concept conjunction in the left-hand side of axioms, essentially restricting $\mathcal{ELHI}$ to axioms of the forms 
  $\cA \dlsubseteq \cB$, $\cA \dlsubseteq \exists{\rs}.\cB$ and  $\exists{\rs}.\cA \dlsubseteq \cB$, with $\rs$ a possibly inverse role, and imposing an additional \emph{harmlessness} constraint that restricts the interaction of inverses and qualified existential restrictions to keep the complexity in $\NL$.  
  The resulting description logic admits rewriting of atomic queries into a simple class of CRPQs, but the lack of conjunction makes it very limited as a Horn DL.  
Motivated by the central role of conjunction in many real-world ontologies, quasi-linear $\mathcal{ELHI}$ takes a slightly different approach, enabling restricted use of conjunction at the expense of additional restrictions on the use of inverse roles\cite{curry_towards_2025}. 
The resulting ontology language appears to be useful in practice, but it is not a proper extension of $\DLLite$. 
We aim to push the boundary of description logics that admit reasoning in $\NL$ and rewritings into graph query languages, extending $\DLLite$ without compromising too much of the additional expressiveness of $\mathcal{ELI}$. In particular, instead of disallowing the use of the conjunction, we leverage the expressiveness of N2RPQs to support it in a controlled way. 

In this paper, we propose $\ourEL$ as the first $\NL$-data-complete ontology language that fully subsumes $\DLLite$ and that supports conjunction and reachability axioms of the form $\exists{\rr}.\cA \dlsubseteq \cA$. To maintain $\NL$ data complexity, $\ourEL$ utilises a stratification of concept names that guarantees a bound on the alternation induced by the interaction of reachability, inverses, and conjunction, preventing the complexity from escalating to $\PTime$. We demonstrate $\NL$-membership for instance checking via a rewriting into N2RPQs. 
Finally, we show that while the data complexity is low, the combined complexity of $\ourEL$ is $\PSpace$-complete; this is consistent in spirit with the restriction of $\complexityclass{ExpTime}$ 
logics to linear recursion \cite{dantsin_gottlob_2001}, and witnesses the additional expressive power of our fragment in comparison to  $\DLLite$, $\EL$, and harmless linear $\mathcal{ELHI}$,  all of which allow for reasoning in polynomial time.

%% file: sections/02-preliminaries.tex

\section{Preliminaries}\label{sec:preliminaries}

We start by recalling the basics on description logics (DLs)~\cite{dlbook} 
and query answering~\cite{OrtizS12}.

\paragraph*{DLs.}\label{subsec:prelim-dls}
  We fix countably infinite pairwise disjoint sets of 
  \emph{individual names} \( \Ilang \), \emph{concept names} \( \Clang \), 
  and \emph{role names}~\( \Rlang \) and introduce the description 
  logic \( \ELIbot \). Starting from \( \Clang \) and \( \Rlang \), 
  the set~\( \elibotconc \) of $\ELIbot$-\emph{concepts} is built using 
  the following concept constructors: 
  \emph{conjunction} \((\cC \dland \cD) \), 
  \emph{existential restriction} (\(\exists{\rr}.\cC \)) 
  and \emph{bottom} and \emph{top concepts} (\( \botconcept, \topconcept \)), 
  with the following grammar:

  \begin{equation*} \label{eq:elihbot-grammar}
  \cC, \cD \; \Coloneqq 
  \; \botconcept \; \mid 
  \; \topconcept \; \mid \; 
  \cA \; \mid \; \cC \dland \cD \; 
  \mid \; \exists{\rr}.\cC \; \mid \; \exists{\bar{\rr}}.\cC,
  \end{equation*}
  where $\cC$, $\cD$ are $\elihbotconc$-concepts, $\cA \in \Clang$ is a concept name, 
  $\rr \in \Rlang$ is a role name, and $\bar{\rr}$ denotes its inverse.
  We will always assume that the inverse of $\bar{\rr}$ is just $\rr$, and use (possibly decorated) letters $\rs$ to denote possibly inverted roles. 

\paragraph*{KBs.}\label{subsec:prelim-KBs}
  \emph{Assertions} take the form \( \cC(\ia) \), \(\rs(\ia,\ib) \)  for \(\ia,\ib \in \Ilang \), \(\cC \in \Clang \cup \{ \top, \bot \} \) and (possibly inverted) role \(\rs \).
  A~\emph{general concept inclusion} (GCI) has the form \(\cC \dlsubseteq \cD \) for concepts \(\cC, \cD \). 
  We employ $\cC \equiv \cD$ as a shorthand for the two GCIs \(\cC \dlsubseteq \cD \) and \(\cD \dlsubseteq \cC \).
  A~\emph{knowledge base} (KB) \(\kbK \deff (\aboxA, \tboxT) \) is composed of a finite non-empty set~\(\aboxA \) (\emph{ABox}) of assertions and a finite non-empty set \(\tboxT \) (\emph{TBox}) of GCIs. 
  The elements of \(\aboxA \cup \tboxT \) are called \emph{axioms}. 
  Let \( \indK \), \( \conK \), \( \rolK \)  denote the sets of all individual, concept, and role names~from~\( \kbK \), respectively (including $\top, \bot$ in $\conK$ and role inverses in $\rolK$). We use the analogue notation for TBoxes and ABoxes as well.
  A KB \(\kbK \) is in \emph{normal form} if all its GCIs conform to the following pattern:
  \[
    \cA \dlsubseteq \cB, \quad 
    \cA \dland \cB \dlsubseteq \cC, \quad 
    \cA \dlsubseteq \exists{\rs}.\cB, \quad 
    \exists{\rs}.\cA \dlsubseteq \cB, 
  \]
  where $\cA, \cB, \cC$ are either concept names, $\top$, or $\bot$, while $\rs$ is a possibly inverted role.
  Without loss of generality, we also require that $\top$ and $\bot$ do not appear in GCIs of the form $\cA \dland \cB \dlsubseteq \cC$, and that trivially unsatisfiable formulae like $\exists{\rs}.\bot$ are always replaced with $\bot$.

  \begin{table}[!htb]
    \begin{minipage}{.65\linewidth}
      \caption{Concepts and roles in \( \ELIbot \).}\label{tab:ALCcap}
      \centering
          \begin{tabular}{@{}l@{\ \ \ }c@{\ \ \ }l@{}}
              \hline\\[-2ex]
              Name & Syntax & Semantics \\ \hline \\[-2ex]
              bottom concept & \( \botconcept \) & \( \emptyset  \) \\
              top concept & \( \topconcept \) & \( \DeltaI  \) \\
              concept name & \(\cA \) & \(\cA^\interI \subseteq \DeltaI  \) \\ 
              role name & \(\rr \) & \(\rr^\interI \subseteq \DeltaI {\times} \DeltaI \) \\ 
              inverse role & \( \bar{\rr} \) & \( \{ (\de, \dd) \mid (\dd, \de) \in \rr^{\interI} \} \) \\ 
              conc.\ negation & \(\neg\cC \)& \(\DeltaI \setminus \cC^{\interI} \) \\  
              conc.\ intersection & \(\cC \dland \cD \)& \(\cC^{\interI}\cap \cD^{\interI} \) \\  
              exist.\ restriction & \(\exists{\rs}.\cC \) & 
              \(\{ \dd \mid \exists{\de}.(\dd,\de)\in \rs^{\interI} \land \de \in \cC^{\interI} \} \)
              \\\hline
          \end{tabular}
    \end{minipage}%
    \begin{minipage}{.35\linewidth}
    \caption{Axioms in \( \ELIbot \).}\label{tab:axm}
      \centering
          \begin{tabular}{ l l }
              \hline\\[-2ex]
              Axiom \(\alpha \) & \(\interI \models\alpha \), if \\ \hline \\[-2ex] 
              \(\cC \dlsubseteq \cD \) & \(\cC^{\interI} \subseteq \cD^{\interI}  \)\hspace{5ex} \mbox{TBox}~\(\tboxT \) \\\hline \\[-2ex]
              \(\cC(\ia) \) & \(\ia^{\interI} \in \cC^{\interI}  \)\hfill \mbox{ABox}\(~\aboxA \)\hspace{-5ex} \\
              \(\rr(\ia,\ib) \) & \((\ia^{\interI}, \ib^{\interI} )\in \rr^{\interI}  \)
              \\\hline
          \end{tabular}
    \end{minipage} 
  \end{table}

  The semantics of \(\ELIbot \) is defined via \emph{interpretations} 
  \(\interI \deff (\DeltaI, \cdotI) \) composed of a non-empty set~\(\DeltaI \) 
  called the \emph{domain of \(\interI \)} and an \emph{interpretation function} 
  \(\cdotI \) mapping individual names to elements of~\(\DeltaI \), concept 
  names to subsets of \(\DeltaI \), and role names to subsets of \(\DeltaI \times \DeltaI \). 
  This mapping is extended to complex concepts (see Table~\ref{tab:ALCcap}) and finally 
  used to define \emph{satisfaction} of assertions and GCIs (see Table~\ref{tab:axm}). 
  \emph{Structures} are interpretations with a partial assignment of individual names.
  We say that an interpretation \(\interI \) \emph{satisfies} a KB  
  \(\kbK \deff (\aboxA,\tboxT) \) (or \(\interI \) is a \emph{model} of 
  \(\kbK \), written: \(\interI \models \kbK \)) if it satisfies all 
  axioms of~\(\aboxA\cup\tboxT \). 
  A KB is \emph{consistent} (or \emph{satisfiable}) if it has a model 
  and \emph{inconsistent} (or \emph{unsatisfiable})~otherwise. 

\paragraph*{Automata and paths.}\label{subsec:prelim-automata}

By a \emph{nondeterministic automaton} (NFA) $\aA$ we understand, as usual, a tuple $(Q, \Sigma, \delta, q_0, F)$ where \(Q\) is a finite set of states, \(\Sigma\) is a finite alphabet, \(\delta \subseteq Q \times \Sigma \times Q\) is the transition relation, \(q_0 \in Q\) is the initial state and \(F \subseteq Q\) is the set of accepting states. 
All NFA considered in this paper are over finite subsets of $\{ \rr, \bar{\rr}, \cC? \mid \rr \in \Rlang, \cC \in \Clang \}$, where the expressions of the form $\cC?$ are called \emph{tests}.
For a natural number $n$ we also define \emph{$n$-nested NFA} (nNFA) by induction: a 0NFA is just an NFA, and an $(n{+}1)$NFA is an NFA whose alphabet may contain letters $\aA?$ for $\aA$ being an nNFA.

By a \emph{run} of an NFA $\aA = (Q, \Sigma, \delta, q_0, F)$ over an interpretation $\interI$, we understand a finite sequence of tuples $(\dd_0, q_0, w_1, q_1, \dd_1), \ldots, (\dd_{\ell-1}, q_{\ell-1}, w_\ell, q_\ell, \dd_\ell)$ in $\DeltaI \times \delta \times \DeltaI$ such that for all $i < \ell$ we have:
\begin{enumerate}[label=(\roman*)]
  \item if $w_{i+1} = \rr$, then $(\dd_i, \dd_{i+1}) \in \rr^{\interI}$,
  \item if $w_{i+1} = \bar{\rr}$, then $(\dd_{i+1}, \dd_i) \in \rr^{\interI}$, and
  \item if $w_{i+1} = \cC?$ for a concept name $\cC$, then $\dd_i = \dd_{i+1}$ and $\dd_i \in \cC^{\interI}$.
\end{enumerate}
A run is \emph{accepting} if $q_\ell \in F$. We say that a run \emph{starts from} $\dd$ if $\dd_0 = \dd$.
We define \emph{runs} of an $n$NFA for all $n > 0$ analogously, adding the following extra condition:
\begin{enumerate}[label=(\roman*)]
  \setcounter{enumi}{3}
  \item if $w_{i+1} = \aB?$ for an $(n{-}1)$NFA $\aB$, then $\dd_i = \dd_{i+1}$ and there is an accepting run of $\aB$ starting from $\dd_i$.
\end{enumerate}


\paragraph*{Queries.}\label{subsec:prelim-query}
An \emph{instance query} ($\IQ$) $q$ is an expression of the form $\cC(\ia)$, where $\ia$ is an individual name and $\cC$ is a concept name. An interpretation $\interI$ satisfies such a query (written $\interI \models q$) if $\ia^{\interI} \in \cC^{\interI}$.
A~\emph{nested two-way regular path query} ($\NTwoRPQ$) $q$ is an expression of the form $\exists{x}\aA(\ia,x)$ for an individual name~$\ia$ and an nNFA $\aA$. An interpretation $\interI$ satisfies such a query (written $\interI \models q$) if there is an accepting run of $\aA$ in~$\interI$ starting from $\ia^{\interI}$.
In the \emph{query entailment problem}, we are given a KB $\kbK \deff (\aboxA, \tboxT)$ and a query~$q$ and ask if $\kbK \models q$, i.e. whether every model of $\kbK$ satisfies $q$.
Here we are interested in computational complexity of this problem. If all of $\aboxA$, $\tboxT$ and $q$ are part of the input, we talk about \emph{combined complexity}. If only $\aboxA$ is part of the input, while $\tboxT$ and $q$ are fixed beforehand, we talk about \emph{data complexity}.
Given a query language $\someL$ (say, first-order logic or $\NTwoRPQ$) say that a TBox~$\tboxT$ is $\someL$-\emph{rewritable} if for every instance query $q$ there is a query $q'$ in $\someL$ such that for all ABoxes $\aboxA$ and all~$\ia \in \indA$, we have $(\aboxA, \tboxT) \models q(\ia)$ if and only if $\aboxA \models q'(\ia)$. 
Briefly speaking, the query entailment problem over KBs with $\someL$-\emph{rewritable} TBoxes boils down to just evaluating $\someL$-queries over ABoxes.

%% file: sections/03-ourlogic.tex

\section{Our Logic and the Rewriting}\label{sec:our-logic-and-rewriting}


In this section, we introduce $\ourEL$ and present the rewriting of 
instance queries into $\NTwoRPQ$s. 

\begin{definition}\label{def:stratified-kb}
  We say that a KB \(\kbK = (\aboxA, \tboxT) \) in normal form is \emph{stratified} 
  whenever there exists a preorder $\preceq$ on $\conK \cup \rolK$ satisfying all the following conditions (where $\prec$ denotes the strict part of $\preceq$, and $\cA, \cB, \cC$ are concept names, and $\cD$ stands for a concept name or $\top$):
  \begin{itemize}
    \item Whenever $(\cA \dlsubseteq \cB) \in \tboxT$, then $\cA \preceq \cB$.
    \item Whenever $(\cA \dland \cB \dlsubseteq \cC) \in \tboxT$, then $\cA \preceq \cC$, $\cB \preceq \cC$, and at least one of $\cA \prec \cC$ or $\cB \prec \cC$ holds.
    \item Whenever $(\cA \dlsubseteq \exists{\rs}.\cD) \in \tboxT$, then $\cA \preceq \cD$ or $\cD = \top$; and $\cA \preceq \rs$.
    \item Whenever $(\exists{\rs}.\cD \dlsubseteq \cB) \in \tboxT$, then either $\cD \prec \cB$, $\cD = \cB$, or $\cD =\top$; and $\rs \preceq \cD$.
    \item We assume that $\rr \preceq \rr^-$ and $\rr^- \preceq \rr$ for all role names $\rr$.
  \end{itemize}
  By $\ourEL$ we denote the fragment of $\ELIHbot$ whose KBs are in normal form and stratified.\myqed
\end{definition}
Note that the above conditions on $\preceq$ do not apply to GCIs with $\bot$ on the right-hand side, and hence disjointness of concepts is not subject to any restriction in $\ourEL$.
 It is easy to see that $\ourEL$ strictly extends $\DLLite$~\cite[Def.~7.9]{dlbook} while being a strict fragment of $\ELIHbot$.
A separating example is the non-first-order-rewritable TBox $\tboxT \deff \{ \exists{\rr}.\cA \dlsubseteq \cA \}$, which belongs to our logic but not to $\DLLite$~\cite[Consequence of Thm.~7.8]{dlbook}.
On the other hand, $\tboxT' \deff \{\exists{\rr}.\cA \dland \exists{\rs}.\cA \dlsubseteq \cA \}$ is a TBox that belongs to $\ELIbot$ but not to $\ourEL$.
The rewritings of the instance query $\cA(x)$ over both $\tboxT$ and $\tboxT'$ are inherently recursive and thus fall outside first-order logic.
A reader familiar with Datalog~\cite[Sec.~7.3]{dlbook} will recognize that the rewriting for $\tboxT$ fits into the fragment of Datalog with linear recursion, whereas the one for $\tboxT'$ does not.
Consequently, the rewriting of $\tboxT'$ into nested automata or well-known graph query languages such as GQL is not possible, while the rewriting of $\tboxT$ can be expressed as a regular path query, \eg as $(\rr^*; \cA?)(\ia,x)$.
Requiring recursion to be linear is a crucial condition for guaranteeing the existence of rewritings of instance queries.
In our approach, this is enforced by the preorder $\preceq$ in Definition~\ref{def:stratified-kb} (in particular by the second condition).

\begin{example}
It is a well-known fact that in expressive DLs such as $\ELI$, one can simulate concept conjunction by means of role inverses.
Indeed, the GCI $\cA \dland \cB \dlsubseteq \cC$ can be replaced by the following set of GCIs
$\tboxT \deff \{ \cA \dlsubseteq \exists{\rr}.\top,\; \exists{\bar{\rr}}.\cB \dlsubseteq \cD,\; \exists{\rr}.\cD \dlsubseteq \cC \}$,
where the role $\rr$ and the concept name $\cD$ are fresh.
We stress that reusing this ``conjunction emulation'' approach for arbitrary concepts fails in our logic due to the presence of the built-in order $\preceq$.
Indeed, it is routine to check that, in the presence of $\preceq$, the following facts hold for $\tboxT$:
$\cA \preceq \rr$, $\rr \preceq \cB$, $\cB \prec \cD$, $\rr \preceq \cC$, $\cD \prec \cC$.
In particular, this implies $\cA \prec \cC$ and $\cB \prec \cC$, and thus $\cA$ and $\cB$ cannot be arbitrary concepts.\myqed
\end{example}

The main technical contribution of this paper is the rewriting of instance queries under $\ourEL$-ontologies into nested two-way regular path queries. We state the result formally below.

\begin{theorem}\label{thm:rewriting}
  Take a $\ourEL$-TBox $\tboxT$ and an instance query $q$.
  We can compute an $\NTwoRPQ$ $q'$ such that for all ABoxes $\aboxA$ 
  and all $\ia \in \indA$,
  we have $(\aboxA, \tboxT) \models q(\ia)$ if and only if $\aboxA \models q'(\ia)$.~\myqed
\end{theorem}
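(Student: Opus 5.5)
We construct $q'$ by induction on the stratification $\preceq$, treating the equivalence classes of $\preceq$ on $\con{\tboxT}$ from the bottom up. For every concept name $\cA$ we build an nNFA $\aA_\cA$ such that, for every ABox $\aboxA$ with $(\aboxA,\tboxT)$ consistent and every $\ia \in \indA$, we have $(\aboxA,\tboxT)\models \cA(\ia)$ iff $\aA_\cA$ has an accepting run on $\aboxA$ from $\ia$. The nesting depth of $\aA_\cA$ is bounded by the height of the class of $\cA$. We then set $q' \deff \exists x\,\aA_\cC(\ia,x)$ for $q=\cC(\ia)$. The inconsistency case is handled by adding an alternative branch $\aA_\bot$: it walks from $\ia$ along arbitrary roles and their inverses to some individual $\ib$ and checks that $\ib$ is forced into $\bot$. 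This only requires automata $\aA_\cB$ for the left-hand sides of $\bot$-GCIs. Such GCIs are unrestricted by $\preceq$, but they never feed derivations of other names, so they only need already-built automata.

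The semantic backbone is the standard canonical (chase) model of a Horn $\ELIbot$ KB. It consists of the ABox together with anonymous tree-shaped parts hanging off individuals, and $\cA(\ia)$ is entailed iff $\ia$ belongs to $\cA$ in this model. Derivations of $\cA$ at an element can be unfolded into proof trees. The key step is to show that, thanks to the stratification, every such proof tree is \emph{linear modulo lower strata}. In each derivation step for a name in the class of $\cA$, at most one premise stays in the same class; all others are strictly $\prec$-smaller. For conjunctions $\cB\dland\cB'\dlsubseteq\cA$ this is exactly the second condition of Definition~\ref{def:stratified-kb}. For $\exists\rs.\cD\dlsubseteq\cA$ the premise is $\cD$, which is either $\prec\cA$ or equal to $\cA$; the latter is the reachability case. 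For $\cB\dlsubseteq\cA$ it is a single premise. Hence a derivation of $\cA$ is a path: a sequence of moves along ABox edges ($\exists\rs.\cA\dlsubseteq\cA$) and class-internal concept changes, interleaved with side conditions on strictly lower names. We encode this path as an NFA whose states are the names of the class (plus $\top$) and whose letters are roles, inverses and tests $\aA_{\cB}?$ for lower $\cB$, which exist by induction.

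The main obstacle is the anonymous part. A derivation may leave the ABox via $\cB\dlsubseteq\exists\rs.\cD$, derive facts inside the tree, and return via $\exists\bar\rs.\cD'\dlsubseteq\cB'$. We will handle this by precomputing, for each lower-stratum "type" reachable in the anonymous part, which concept names propagate back to the parent. This is a finite, TBox-only computation, so it is fixed in data complexity. The conditions $\cA\preceq\rs\preceq\cD$ guarantee that anything coming back from the anonymous subtree is either derived from strictly lower information or stays within the class along a single linear chain. We therefore replace each anonymous excursion by a derived "shortcut" rule $\cB\dlsubseteq\cA'$ or $\cB\dland\cE\dlsubseteq\cA'$ that respects the same stratification. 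We would first try to prove that the saturated TBox, closed under these shortcuts and the composition of inclusions, stays stratified and is complete for ABox elements. Then only ABox-internal linear derivations remain, and the automaton construction above applies.

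Correctness splits into two parts. Soundness (an accepting run implies entailment) follows by induction on run length and nesting depth. Completeness (entailment implies a run) follows by induction along the linearized proof tree. Both inductions use the hypothesis for lower strata inside tests. Finally, we note that every automaton depends only on $\tboxT$, which yields the claimed computability of $q'$.
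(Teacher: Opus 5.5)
Your architecture is the paper's. You build automata stratum by stratum, delegate strictly lower names to nested tests $\aA_{\cB}?$, track a single same-stratum goal linearly (your states are the paper's $\mathit{goal}$ component), and compile the anonymous part into TBox-only entailment checks. The gap sits in the step you leave as ``we would first try to prove'', and the shortcut shapes you fix there are too weak. Consider the TBox
$\cB \dlsubseteq \exists\rs.\top$, $\exists\bar{\rs}.\cE_1 \dlsubseteq \cC_1$, $\exists\bar{\rs}.\cE_2 \dlsubseteq \cC_2$, $\cC_1 \dland \cC_2 \dlsubseteq \cD$, $\exists\rs.\cD \dlsubseteq \cA$. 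It is stratified by $\cB \prec \rs \prec \cE_1 \prec \cE_2 \prec \cC_1 \prec \cC_2 \prec \cD \prec \cA$, with $\rs$ and $\bar{\rs}$ equivalent. For $\aboxA = \{\cB(\ia), \cE_1(\ia), \cE_2(\ia)\}$, the anonymous $\rs$-successor of $\ia$ gets $\cC_1$, $\cC_2$ and $\cD$, so $(\aboxA,\tboxT) \models \cA(\ia)$. However, from any two of the three assertions nothing new follows at $\ia$. Hence every sound rule $\cX \dlsubseteq \cA'$ or $\cX \dland \cX' \dlsubseteq \cA'$ over this signature is idle at $\ia$, and an ABox-internal derivation in your saturated TBox never reaches $\cA(\ia)$.

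What you need are shortcuts $\cB_1 \dland \cdots \dland \cB_k \dland \cE \dlsubseteq \cA'$ with arbitrarily many strictly lower $\cB_i$ and at most one same-stratum $\cE$. The paper realises exactly this through the $\mathit{premise}$ component of its states, filled by \ref{trans:aut}-tests, together with the \ref{trans:anon} transition, which may replace the goal by any $\cE$ such that $\mathit{premise} \cup \{\cE\}$ entails it. Compiling these into rules checked by chains of tests would be a fine variant. But the fact that justifies it is that, even across anonymous excursions, the lower facts at the named individual plus one same-stratum fact suffice. This is the paper's Lemma~\ref{lemma:main-prop-of-stratified-derivation}, built on the stratified derivations of Lemma~\ref{lemma:exists-stratified-derivation}. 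It is proved by showing that two dependency paths ending in same-height facts would have to split at a conjunction one of whose conjuncts is strictly lower. You only assert it, so your completeness induction ``along the linearized proof tree'' has nothing to run on yet.

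Two further points. First, your claim that all other premises are strictly $\prec$-smaller skips the case $\exists\rs.\top \dlsubseteq \cB$. There, Definition~\ref{def:stratified-kb} literally asks only for $\rs \preceq \top$, which does not relate $\rs$ to $\cB$ unless $\top$ is meant to be $\preceq$-least. Then $\cA \dlsubseteq \exists\rs.\top$ and $\exists\rs.\top \dlsubseteq \cB$ with $\cB \prec \cA$ are admissible, and a lower name depends on a higher one. This breaks your bottom-up construction (and equally the paper's height induction), so you must read that clause as $\rs \preceq \cB$.

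Second, your $\aA_\bot$-branch only explores the connected component of $\ia$. It therefore cannot witness an inconsistency located elsewhere in $\aboxA$, even though $(\aboxA,\tboxT) \models q(\ia)$ then holds. No query evaluated by runs from $\ia$ can detect this, so the equivalence has to be restricted to consistent KBs; the paper's construction has the same blind spot.
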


In the remainder of the section, we fix $q$ and $\tboxT$ as in the statement of Theorem~\ref{thm:rewriting}, and present how to construct the desired query $q'$.\footnote{W.l.o.g. we assume that the concept name from $q$ occur in $\tboxT$; otherwise the entailment is trivially false.}
Soundness and completeness of the rewriting are established in the next two sections, respectively.
Our construction of $q'$ is inductive and based on the \emph{height} of the concept/role names $\alpha$ in $\tboxT$, namely the length of the longest chain of concept or role names $\alpha_1 \prec \alpha_2 \prec \ldots \prec \alpha_n$ for which~$\alpha_n = \alpha$.
This notion is lifted to concepts from $\tboxT$ as follows: the height of $\top$ and $\bot$ is $0$, and the height of concepts of the form $\exists{\rs}.\cA$ is the maximum of heights of $\rs$ and $\cA$.
For convenience, let $\tboxT\rstr{n}$ and $\aboxA\rstr{n}$ respectively denote the set of GCIs and assertions from $\tboxT$ and $\aboxA$ that solely concepts of height at most $n$.
Our inductive claim is stated below ($\tboxT\rstr{{-}1}$ is just the empty set).

\begin{center}
\textbf{Inductive assumption for $n \in \N$:} For each concept name $\cA$ in $\tboxT$ of height $n$, we can construct an $n$-nested NFA $\aA_{\cA}$ over $\{ \cB? \mid \cB \in \con{\tboxT\rstr{n}} \} \cup  \rol{\tboxT\rstr{n}} \cup \{ \aA_{\cB}? \mid \cB \in \con{\tboxT\rstr{n{-}1}} \}$ such that for all ABoxes $\aboxA$ and all $\ia \in \indA$, we have $(\aboxA, \tboxT) \models \cA(\ia)$ if and only if $\aboxA \models \exists{x} \aA_{\cA}(\ia, x)$.
\end{center}

Less formally, our goal is to construct an automaton that witnesses the satisfaction of $\cA$ at a given node in an ABox, using the GCIs in $\tboxT$ as inference rules.
Its alphabet consists of (possibly inverted) role names, tests over concept names of height at most $n$ (enabling the automaton to read labels from the ABox), and inductively constructed automaton-tests detecting membership in concepts of height strictly less than $n$ (enabling it to delegate the verification of certain concepts to ``simpler'' automata).
The states are pairs $(\mathit{premise}, \mathit{goal})$, where the first component is a set of concept names from $\tboxT$ and the second one is a single one of those.
Along a run, the automaton tracks the concepts from $\tboxT$ already known to hold at the current node (the \emph{premise}), as well as those that remain to be verified there (the \emph{goal}).
Transitions are either derived from the GCIs in $\tboxT$, mimicking a (reversed) entailment proof via successive rule applications, or read off from the ABox, recording which concept names hold at the current node.
For instance, whenever a GCI $\cA \dlsubseteq \cB$ is present in $\tboxT$ and $\cB$ belongs to the goal, the automaton may replace $\cB$ in its goal with $\cA$.
Finally, a state is accepting if and only if its goal component is empty, \ie all required concepts have been verified; in other words, a run is accepting if the automaton successfully discharges all its goals.

\paragraph*{Base case.}
We handle the base case first, as it already captures the key challenges of our technique.
For the reader's convenience, we interweave formal definitions with informal descriptions.

\begin{definition}\label{def:automaton-base-case}
  Fix a concept name $\cA \in \con{\tboxT\rstr{0}}$.
  We define an NFA $\aA_{\cA} \deff (Q, \Sigma, \delta, q_0, F)$ as follows.
  \begin{itemize}[itemsep=0em]

  \item We set $\Sigma \deff \{ \cB? \mid \cB \in \con{\tboxT\rstr{0}} \} \cup \rol{\tboxT\rstr{0}}$, as already stated in the inductive claim. Note that in the base case no automaton-tests are used.

  \item The state set $Q$ consists of all pairs $(\mathit{premise}, \mathit{goal})$ for $\mathit{premise} \subseteq \con{\tboxT\rstr{0}}$ and $\mathit{goal} \in \con{\tboxT\rstr{0}}$.

  \item The initial state is $q_0 \deff (\{ \top \}, \cA)$.
    The intuition is straightforward: we wish to verify $\cA$ at the current node, but we currently only know that $\top$ holds there.

  \item The set of accepting states $F$ consists of all pairs in $Q$ where $\mathit{goal} \in \mathit{premise}$ or $\bot \in \mathit{premise}$.

  \item The transition relation $\delta$ is defined as the minimal subset of $Q \times \Sigma \times Q$
    such that, for every state $q \deff (\mathit{premise}, \mathit{goal})$, all the following conditions hold.

  \begin{description}[itemsep=0em]

    \item[\desclabel{(weak)}{trans:weak}]
    For any subset $\mathit{premise}'$ of $\mathit{premise}$, there is a transition from $q$ to $(\mathit{premise}' \cup \{\top\}, \mathit{goal})$ labelled~``$\top?$''. 
    This allows the automaton to forget some of the information in the premise, which will be crucial for handling GCIs with conjunctions in the premise.

    \item[\desclabel{(data)}{trans:data}]
      For any concept name $\cB \in \con{\tboxT\rstr{0}}$, there is a transition from $q$ to
      $(\mathit{premise} \cup \{\cB\}, \mathit{goal})$ labelled~``$\cB?$''.
      This allows the automaton to read from the ABox which concept names hold at the
      current node and add them to the premise.

    \item[\desclabel{(sbus)}{trans:sbus}]
      For any GCI $\cC \dlsubseteq \cD$ from $\tboxT\rstr{0}$, if $\mathit{goal} = \cD$,
      there exists a transition from~$q$ to
      $(\mathit{premise},\, \cC)$
      labelled~``$\top?$''.
      This formalizes the following backward reasoning: to prove $\cD$ given the GCI
      $\cC \dlsubseteq \cD$, it suffices to prove $\cC$.

    \item[\desclabel{(succ)}{trans:succ}]
      If
      $\mathit{goal} = \cD$, then for any GCI
      $\exists{\rs}.\cC \dlsubseteq \cD$ in $\tboxT\rstr{0}$, there is a transition from $q$ to
      $(\{ \top \},  \cC )$ labelled~``$\rs$''. 
      This formalizes the following reasoning: given the GCI $\exists{\rr}.\cC \dlsubseteq \cD$,
      the automaton may move to a neighbouring node in the ABox and verify that $\cC$ holds there.
      The premise is reset to $\{ \top \}$ since the automaton does not have any non-trivial information on that neighbour.
      The fact that it suffices to track a single successor at a time---which is essential for the
      feasibility of our construction---follows from the linear recursion enforced by~Definition~\ref{def:stratified-kb}.

    \item[\desclabel{(anon)}{trans:anon}]
      For any concepts $\cB$ and $\cD$ from $\con{\tboxT\rstr{0}}$, if the KB
      $(\{ \cC(\ia) \mid \cC \in \mathit{premise} \cup \{\cB\}\}, \tboxT\rstr{0})$ entails $\cD(\ia)$ and $\mathit{goal} = \cD$,
      there is a transition from $q$ to $(\mathit{premise} , \cB)$
      labelled~``$\top?$''.
      Though this transition may appear redundant, it is crucial for correctness:
      it enables the automaton to infer the existence of anonymous individuals satisfying
      certain concepts, which is essential for handling GCIs with existential restrictions
      in the premise.

  \end{description}

  Note that GCIs of the form $\cA \dland \cB \dlsubseteq \cC$ play no role in the definition
  of $\delta$ here. By the second condition of Definition~\ref{def:stratified-kb}
  and by the definition of a normal form, such GCIs cannot solely involve concepts from $\con{\tboxT\rstr{0}}$.~\myqed
  \end{itemize}
\end{definition}

We illustrate our construction of $\aA_{\cA}$ for a (non-first-order-rewritable) TBox $\tboxT \deff \{ \exists{\rr}.\cA \dlsubseteq \cA \}$ and an instance query $q \deff \cA(\ia)$. The relevant \emph{fragment} of $\aA_{\cA}$ is depicted below. It is equivalent to the regular path query $(\rr^*;\cA?)(\ia, x)$, the expected rewriting for this query and~TBox.

\begin{center}
\begin{tikzpicture}[
    shorten >=1pt,
    node distance=3cm,
    on grid,
    auto,
    every state/.style={draw, circle, inner sep=2pt, minimum size=1.8cm}
]

\node[state, initial]   (q1)                  {\scalebox{0.55}{$\{ \top \}, \{ \cA \}$}};
\node[state, accepting] (q2) [right=of q1]    {\scalebox{0.55}{$\{ \top, \cA \}, \{ \cA \}$}};

\path[->]
    (q1) edge                node {$\cA?$}  (q2)

    (q1) edge [loop, out=225, in=200, looseness=5]    node {$r,\top?$}   (q1)
    (q2) edge [loop, out=225, in=200, looseness=5]    node {$\cA?,\top?$}   (q2);

\end{tikzpicture}
\end{center}

\paragraph*{Inductive step.}
We now proceed with the inductive step. The main difference with the base case is that we can now also use automaton-tests in the transitions to deal with GCIs involving conjunction.

\begin{definition}\label{def:automaton-general-case}
  Fix $n > 0$ and a concept name $\cA \in \con{\tboxT\rstr{n}}$, and assume that $\aA_{\cB}$ has already
  been defined for each $\cB \in \con{\tboxT\rstr{n-1}}$.
  We define a nested NFA $\aA_{\cA} \deff (Q, \Sigma, \delta, q_0, F)$ as follows.
  \begin{itemize}[itemsep=0em]

    \item The components $\Sigma$, $Q$, $q_0$, and $F$ are defined as in the base case, with the
      sole difference that $\Sigma$ now also contains automaton-tests of the form $\aA_{\cB}?$
      for each $\cB \in \con{\tboxT\rstr{n-1}}$, and all occurrences of $\tboxT\rstr{0}$ are replaced with $\tboxT\rstr{n}$.

    \item The transition relation $\delta$ is the minimal subset of $Q \times \Sigma \times Q$
      such that, for every state $q \deff (\mathit{premise}, \mathit{goal})$, all conditions
      from the base case (after replacing $\tboxT\rstr{0}$ with $\tboxT\rstr{n}$) hold, as well as the following three conditions are satisfied.

    \begin{description}    

    \item[\desclabel{(noc)}{trans:noc}]
    For any GCI $\cB \dland \cC \dlsubseteq \cD$ in $\tboxT\rstr{n}$ with $\cB \in \mathit{premise}$ and $\mathit{goal} = \cD$,
    there exists a transition from~$q$ to $(\mathit{premise},  \cC )$
    labelled ``$\top?$''.
    The case of the GCI $\cB \dland \cC \dlsubseteq \cD$ with $\cC \in \mathit{premise}$ and $\cD \in \mathit{goal}$ is handled analogously.
    This formalises the standard backward reasoning step for conjunction.

    \item[\desclabel{(aut)}{trans:aut}]
    For any concept name $\cB \in \con{\tboxT\rstr{n-1}}$ there is a transition 
    from $q$ to $(\mathit{premise} \cup \{ \cB \}, \mathit{goal})$ 
    labelled~``$\aA_{\cB}?$''. 
    As the height of $\cB$ is strictly less than that of $\cA$, the automaton can ``delegate'' the proof of $\cB$ to the nested automaton $\aA_{\cB}$ (well-defined by the inductive hypothesis).\myqed
    \end{description}
  \end{itemize}
\end{definition}

%% file: sections/04-soundness.tex

\section{Soundness}\label{sec:soundness}

In this section we establish the ``only if'' direction of Theorem~\ref{thm:rewriting}, restated for convenience below.

\begin{lemma}\label{lemma:rewriting-soundness}
  Let $\tboxT$ be a stratified $\ourEL$-TBox, 
  $\cA(a)$ be an instance query, and
  $\aA_{\cA}$ be an nNFA from Definition~\ref{def:automaton-general-case}.
  Then for all for all ABoxes $\aboxA$ we have that 
  $\aboxA \models \exists{x} \aA_{\cA}(\ia, x)$ implies 
  $(\aboxA, \tboxT) \models \cA(\ia)$.~\myqed
\end{lemma}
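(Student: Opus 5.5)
We argue by induction on the height $n$ of $\cA$, and inside each level by induction on the length of an accepting run, via a stronger invariant about arbitrary states. Fix an ABox $\aboxA$ and let $\kbK \deff (\aboxA,\tboxT)$. If $\kbK$ is inconsistent the claim is trivial, so assume it is consistent. For a state $q = (\mathit{premise},\mathit{goal})$ and an element $\dd$ of the ABox, say that $q$ is \emph{valid at} $\dd$ if the following holds for every model $\interI$ of $\kbK$: whenever $\dd^{\interI}$ satisfies every concept in $\mathit{premise}$, it also satisfies $\mathit{goal}$.

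The key claim is the following. If there is an accepting run of $\aA_{\cA}$ that starts in state $q$ at element $\dd$, then $q$ is valid at $\dd$. Since the initial state is $(\{\top\},\cA)$, the lemma follows immediately. We prove the claim by backward induction on the run. For an accepting state, either $\mathit{goal}\in\mathit{premise}$ or $\bot\in\mathit{premise}$, so validity is trivial. For the inductive step, look at the first transition $q \to q'$, moving from $\dd$ to $\dd'$, and assume $q'$ is valid at $\dd'$. We then check each transition kind.

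\begin{description}
\item[(weak)] Shrinking the premise only strengthens validity.
\item[(data)] The test $\cB?$ succeeds, so $\cB(\dd)\in\aboxA$ and every model satisfies $\cB$ at $\dd$; hence $\cB$ can be dropped from the premise.
\item[(aut)] The nested run of $\aA_{\cB}$ is accepting and $\cB$ has smaller height, so by the outer induction hypothesis $\kbK\models\cB(\dd)$; this case then works like (data).
\item[(sbus)] Use the GCI $\cC\dlsubseteq\cD$.
\item[(noc)] Use $\cB\dland\cC\dlsubseteq\cD$: $\cB$ is in the premise and $\cC$ follows by validity of $q'$.
\item[(succ)] Here $q'=(\{\top\},\cC)$ is valid at $\dd'$ with $\rs(\dd,\dd')$ in $\aboxA$. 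Then every model has $\dd'^{\interI}\in\cC^{\interI}$, so $\dd^{\interI}\in(\exists\rs.\cC)^{\interI}\subseteq\cD^{\interI}$, regardless of the premise.
\end{description}

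The delicate case, and the one I expect to be the main obstacle, is \textbf{(anon)}. Here $q'=(\mathit{premise},\cB)$ is valid at the same node $\dd$, and $(\{\cC(\ia)\mid\cC\in\mathit{premise}\cup\{\cB\}\},\tboxT\rstr{n})\models\cD(\ia)$. Take a model $\interI$ of $\kbK$ in which $\dd^{\interI}$ satisfies the premise. Validity of $q'$ gives $\cB$ at $\dd^{\interI}$. We then need to transfer the local entailment to $\interI$. The plan is to show that entailment from a single-individual ABox under $\tboxT\rstr{n}$ lifts to any element of any model of $\tboxT$ that satisfies those concepts. To do this, consider the interpretation $\interI'$ obtained from $\interI$ by renaming: interpret the fresh individual $\ia$ as $\dd^{\interI}$. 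Since $\interI\models\tboxT\supseteq\tboxT\rstr{n}$, the structure $\interI'$ is a model of that small KB, and so $\dd^{\interI}\in\cD^{\interI}$. The points to check are that the premise concepts all lie in $\con{\tboxT\rstr{n}}$ and that the small KB may be inconsistent, in which case it vacuously entails $\cD$; but inconsistency is impossible here, because $\interI'$ is a model of it.

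Finally, we must account for the initial transition labels $\top?$ and for the base case $n=0$. In the base case the (aut) and (noc) transitions are absent, so the same argument without the outer hypothesis applies. We also need well-foundedness of the outer induction, which holds because the heights are finite.
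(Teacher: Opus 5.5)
Your proposal is correct and follows the same route as the paper. Both use an outer induction on height and a backward induction along the accepting run, showing that satisfying the premise forces the goal. Both then do the same case analysis over (weak), (data), (sbus), (succ), (anon), (noc) and (aut), and your (anon) argument, reinterpreting the individual at $\dd^{\mathcal{I}}$, matches the paper's.

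The only difference is in bookkeeping. The paper fixes one model $\mathcal{I}$ and a run inside it. You run the automaton on the ABox itself and quantify validity over all models of $\mathcal{K}$. A small benefit of your version is that in the (aut) case the outer hypothesis, which is stated for ABox individuals, applies exactly as stated. In the paper's version the run may reach unnamed elements of $\mathcal{I}$, where that hypothesis does not literally apply.
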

\begin{proof}
Fix a TBox $\tboxT$. 
The proof is by induction on
the height $n$ of the concept name $A$.
We assume that for all $n' < n$ and concept names $\cB \in \con{\tboxT\rstr{n'}}$, the statement of Lemma~\ref{lemma:rewriting-soundness} holds for the instance query $\cB(\ia)$. 

Then we consider 
an instance query $\cA(\ia)$ with $\cA \in \con{\tboxT\rstr{n}}$.
Fix an ABox $\aboxA$, a model $\interI$ of $(\aboxA, \tboxT)$, and let $\rho = \rho_1 \cdots \rho_\ell$ be an accepting run of $\aA_{\cA}$ in $\interI$ starting from $\ia^{\interI}$, which exists by entailment $\aboxA \models \exists{x} \aA_{\cA}(\ia, x)$).
For convenience, let $\rho_i$ unfold to $(\dd_{i-1}, \delta_i, \dd_i)$ with $q_i = (\mathit{premise}_i, \mathit{goal}_i)$ and $\delta_i = (q_{i-1}, w_i, q_i)$.
In the appendix, we prove the following claim. The proof is by induction on $i$, backwards from $i = \ell$ to $i = 0$. 

\smallskip \noindent
  \qquad If $\dd_{i}$ satisfies all concepts from $\mathit{premise}_i$ in $\interI$, then it also satisfies all concepts from $\mathit{goal}_i$. 

\smallskip
 From this claim, the statement of the lemma follows, as the initial state of $\aA_{\cA}$ is $(\{ \top \}, \cA )$, while the accepting states have their goal already proven.
\end{proof}

%% file: sections/05-completeness.tex

\section{Completeness}\label{sec:completeness}

The goal of this  section is to establish the remaining direction of Theorem~\ref{thm:rewriting}.
But before we can proceed with the proof, we need some additional definitions. 
First, we recall the standard notion of a derivation.

\begin{definition}\label{def:derivation}
Let $\tboxT$ be an $\ourEL$-TBox.
A \emph{$\tboxT$-derivation} of $\cA(\ia)$ from an ABox $\aboxA$ is a finite sequence
$\aboxA_0, \aboxA_1, \ldots, \aboxA_n$ of ABoxes such that $\aboxA_0$
is a subset of $\aboxA$, $\aboxA_n$ contains $\cA(\ia)$, and for every $i < n$
the ABox $\aboxA_{i{+}1}$ can be derived from $\aboxA_i$
(denoted $\aboxA_i \leadsto_{\tboxT} \aboxA_{i{+}1}$), \ie there exists a
GCI $\cB \dlsubseteq \cC$ in $\tboxT$ and an individual name $\ib$ such that:\footnote{For brevity, we abuse the notation and write $\top(\ib) \in \aboxA_i$ to indicate that $\ib$ appears in $\aboxA_i$.}
\begin{itemize}
  \item Both $\cB$ and $\cC$ are concept names or $\top$, $\cB(\ib) \in \aboxA_i$, and $\aboxA_{i{+}1} = \aboxA_i \cup \{ \cC(\ib) \}$;
  \item $\cB$ is of the form $\cB_1 \dland \cB_2$, both $\cB_1(\ib)$ and $\cB_2(\ib)$ belong to $\aboxA_i$, and $\aboxA_{i{+}1} = \aboxA_i \cup \{ \cC(\ib) \}$;
  \item $\cB$ is a concept name or $\top$, $\cC$ is of the form $\exists{\rs}.\cC_1$, $\cB(\ib) \in \aboxA_i$, and $\aboxA_{i{+}1} = \aboxA_i \cup \{ \rs(\ib, \ic), \cC_1(\ic) \}$ for some fresh individual name $\ic$; 
  \item $\cB$ is of the form $\exists{\rs}.\cB_1$, both $\rs(\ib,\ic)$ and $\cB_1(\ic)$ belong to $\aboxA_i$, and $\aboxA_{i{+}1} = \aboxA_i \cup \{ \cC(\ib) \}$; or
  \item $\bot(\ib) \in \aboxA_i$, and $\aboxA_{i+1} = \{ \cA(\ia) \}$. 
\end{itemize}
In all of these cases, if $\cB(\ib)$ is (one of) the mentioned concept assertions in $\aboxA_i$ and $\cA(\ia)$ is the newly added concept assertion to $\aboxA_{i+1}$, we say that $\cB(\ib)$ \emph{was used to infer} $\cA(\ia)$.
We call the number $n$ the \emph{length} of the derivation.
We say that $\cA(\ia)$ is \emph{$\tboxT$-derivable} from $\aboxA$ whenever there
exists a $\tboxT$-derivation of $\cA(\ia)$ from $\aboxA$.~\myqed
\end{definition}

\newcommand{\ABoxAex}{\aboxA_{\mathrm{ex}}}
\newcommand{\TBoxTex}{\tboxT_{\mathrm{ex}}}
\newcommand{\qex}{q_{\mathrm{ex}}}

The following example illustrates the notion of derivations and will be continued later.
\begin{example}\label{example:deriv}
    Consider an ABox $\ABoxAex \deff \{ \cA(\ia) \}$, an $\ourEL$-TBox
    $\TBoxTex \deff \{ \cA \dlsubseteq \cB,\; \cA \dland \cB \dlsubseteq \cC,\;
    \cC \dlsubseteq \exists{\rr}.\top,\; \exists{\rr}.\top \dlsubseteq \cD \}$,
    and an atomic query $\qex \deff \cD(\ia)$.
    We enumerate the GCIs of $\TBoxTex$ in order of appearance as $\alpha_0, \ldots, \alpha_3$.
    The reader may easily verify that the sequence $\aboxA_0, \ldots, \aboxA_4$ is a
    $\TBoxTex$-derivation of $\qex$ from $\ABoxAex$, where $\aboxA_0 = \aboxA$,
    $\aboxA_1 = \aboxA_0 \cup \{ \cB(\ia) \}$,
    $\aboxA_2 = \aboxA_1 \cup \{ \cC(\ia) \}$,
    $\aboxA_3 = \aboxA_2 \cup \{ \rr(\ia, \ib), \top(\ib) \}$,
    and $\aboxA_4 = \aboxA_3 \cup \{ \cD(\ia) \}$.
    In particular, for all $i < 4$ we have $\aboxA_i \leadsto_{\{\alpha_i\}} \aboxA_{i+1}$.\myqed
\end{example}

The following lemma links derivations with entailment. Its proof is standard, and hence we omit it.
\begin{lemma}[folklore]\label{lemma:folklore}
Let $\aboxA$ be an ABox, $\tboxT$ be a stratified $\ourEL$-TBox,
and $\cA(\ia)$ be an instance query.
Then $(\aboxA, \tboxT) \models \cA(\ia)$ if and only if $\cA(\ia)$ can be
$\tboxT$-derived from $\aboxA$.~\myqed
\end{lemma}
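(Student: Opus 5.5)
The plan is to prove both directions separately: soundness of the derivation calculus with respect to entailment, and completeness by building a canonical model out of a (possibly infinite) saturation of the ABox under the rules of Definition~\ref{def:derivation}.

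\emph{Soundness.} Suppose $\aboxA_0,\ldots,\aboxA_n$ is a $\tboxT$-derivation of $\cA(\ia)$ and let $\interI \models (\aboxA,\tboxT)$. We show by induction on $i$ that $\interI$ can be expanded to interpret the fresh individuals of $\aboxA_i$ so that it satisfies $\aboxA_i$.
\begin{itemize}
  \item Rules for atomic, conjunctive and existential-left GCIs: the new assertion holds in $\interI$ because $\interI$ satisfies the applied GCI.
  \item Existential-right rule applied to $\ib$: since $\interI$ satisfies $\cB \dlsubseteq \exists{\rs}.\cC_1$, there is a witness $\de$ with $(\ib^{\interI},\de)\in\rs^{\interI}$ and $\de\in\cC_1^{\interI}$; we interpret the fresh $\ic$ as $\de$.
  \item $\bot$ rule: $\bot(\ib)$ cannot be satisfied, so the induction hypothesis already gives a contradiction and the step holds vacuously.
\end{itemize}
Hence either $\aboxA_n$ is satisfied, and in particular $\ia^{\interI}\in\cA^{\interI}$, or no model exists. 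In both cases $(\aboxA,\tboxT)\models\cA(\ia)$.

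\emph{Completeness (contrapositive).} Assume $\cA(\ia)$ is not derivable. Build a fair chase sequence starting from $\aboxA$ that applies every rule except the $\bot$ rule, applying each existential-right rule at most once per (individual, GCI) pair. Let $\aboxA_\infty$ be its union, which is possibly infinite but contains only assertions appearing in finite derivations. Two facts follow:
\begin{itemize}
  \item $\bot(\ib)\notin\aboxA_\infty$ for every $\ib$, since otherwise the $\bot$ rule would make $\cA(\ia)$ derivable.
  \item $\cA(\ia)\notin\aboxA_\infty$, since every assertion of $\aboxA_\infty$ is derivable.
\end{itemize}
Let $\interI$ be the interpretation read off from $\aboxA_\infty$, with $\top$ interpreted as the whole domain. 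Check each normal-form GCI against $\interI$:
\begin{itemize}
  \item $\cA\dlsubseteq\cB$, $\cA\dland\cB\dlsubseteq\cC$ and $\exists{\rs}.\cA\dlsubseteq\cB$ hold by closure of $\aboxA_\infty$ under the corresponding rules.
  \item $\cA\dlsubseteq\exists{\rs}.\cB$ holds by fairness.
  \item GCIs with $\bot$ on the right hold because no $\bot$ assertion is present.
\end{itemize}
So $\interI$ is a model of $(\aboxA,\tboxT)$ with $\ia\notin\cA^{\interI}$, and therefore $(\aboxA,\tboxT)\not\models\cA(\ia)$.

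\emph{Main obstacle.} The delicate point is the interplay of $\bot$ with the infinite chase. We must ensure that any occurrence of $\bot$ in $\aboxA_\infty$ arises at a finite stage, so that a finite derivation exists to which the $\bot$ rule can be appended. This follows because each assertion enters the union at some finite step. Some care is also needed with the $\top(\ib)$ convention and with fresh names. Stratification plays no role in this argument; the lemma holds for all normal-form $\ELIbot$ TBoxes.
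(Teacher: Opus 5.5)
The paper omits this proof as folklore. Your argument is precisely the standard one it alludes to, and it is correct: soundness by replaying the derivation inside an arbitrary model (re-interpreting the fresh names), and completeness by taking the limit of a fair chase without the ex-falso rule as a countermodel. Your remark that stratification plays no role is also right.

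Like the paper, your argument tacitly relies on three conventions:
\begin{itemize}
\item $\ia \in \indA$; otherwise $\tboxT = \{\top \dlsubseteq \cA\}$ makes $\cA(\ia)$ entailed but not derivable;
\item $\bar{\rr}(\ib,\ic)$ is identified with $\rr(\ic,\ib)$;
\item the atomic rule also fires for GCIs with $\bot$ on the right.
\end{itemize}
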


Our built-in stratification 
can also be reflected in the derivations, leading to the following notion.  

\begin{definition}
Let $\tboxT$, $\aboxA$, and $\cA(\ia)$ be as in Definition~\ref{def:derivation}, 
and suppose that $\cA$ is of height $\ell$.
A \emph{$(\tboxT, \ell)$-derivation} of $\cA(\ia)$ from $\aboxA$ is  
a finite sequence $(\aboxA_0, \cA_0(\ia_0))$, $(\aboxA_1, \cA_1(\ia_1))$, $\ldots$, $(\aboxA_n, \cA_n(\ia_n))$
of ABoxes and instance queries (called \emph{active queries}) with concepts from $\con{\tboxT\rstr{\ell}}$ such that: 
\begin{itemize}[itemsep=0em]
  \item $\cA_n(\ia_n) = \cA(\ia)$, for all $i \leq n$, the sequence $\aboxA_0, \ldots, \aboxA_i$ is a $\tboxT\rstr{\ell}$-derivation of $\cA_i(\ia_i)$ from $\aboxA$, and
  
  \item for all $i < n$: if $\aboxA_i \leadsto_{\{\cC \dlsubseteq \cD\}} \aboxA_{i+1}$ for some GCI $\cC \dlsubseteq \cD$ from $\tboxT\rstr{\ell}$ with $\cD$ of height $\ell$, then
  \begin{enumerate}
      \item if $\cB'$ is a concept name of height $\ell$ and occurs in $\cC$, then $\cA_i(\ia_i) = \cB'(\ia_i)$; 
      \item $\cA_{i+1}(\ia_{i+1}) = \cB''(\ia_{i+1})$ for the concept name $\cB''$ that occurs in $\cD$. \myqed 
  \end{enumerate}
\end{itemize} 
\end{definition}

The following example illustrates the notion of stratified derivation.
\begin{example}\label{example:stratderiv}
    Let $\ABoxAex$, $\TBoxTex$, and $\qex$ be as in Example~\ref{example:deriv}.
    Suppose that the heights of $\cA$, $\cB$, $\cC$, $\rr$, $\cD$ are $0$, $1$, $2$, $2$, and $3$, respectively.
    The reader may verify that the sequence $(\aboxA_0, \cA_0(\ia_0)), \ldots, (\aboxA_4, \cA_4(\ia_0))$ is a
    $(\TBoxTex, 3)$-derivation of $\qex$ from $\ABoxAex$, where $\cA_0(\ia_0) = \cA(\ia)$,
    $\cA_1(\ia_1) = \cB(\ia)$, $\cA_2(\ia_2) = \cC(\ia)$,
    $\cA_3(\ia_3) = \top(\ib)$, and $\cA_4(\ia_4) = \cD(\ia)$.\myqed
\end{example}

With the next lemma, we connect stratified derivations with the entailment of instance queries.

\begin{lemma}\label{lemma:exists-stratified-derivation}
Let $\tboxT$ be an $\ourEL$-TBox, $\aboxA$ be an ABox, and $\cA(\ia)$ be an 
instance query such that $(\aboxA, \tboxT) \models \cA(\ia)$.
Then there exists a $(\tboxT, \ell)$-derivation of $\cA(\ia)$ from $\aboxA$
with $\ell$ being the height of $\cA$.~\myqed
\end{lemma}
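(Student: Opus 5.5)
We start from Lemma~\ref{lemma:folklore}, which gives a $\tboxT$-derivation $\aboxA_0,\ldots,\aboxA_m$ of $\cA(\ia)$ from $\aboxA$. The plan is to turn it, in three steps, into a derivation that carries a well-defined active query. The three steps are: restrict the derivation to height at most $\ell$; prune it to the assertions that actually matter for $\cA(\ia)$; and reorder it so that the steps producing facts of height exactly $\ell$ form a single chain. This chain is then read off as the sequence of active queries.

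\textbf{Step 1 (restriction).} Apply Definition~\ref{def:stratified-kb} to each rule type. For $\cB\dlsubseteq\cC$, $\cB\dland\cC'\dlsubseteq\cC$ and $\exists{\rs}.\cB\dlsubseteq\cC$, every concept name (and role) in the premise has height at most that of the conclusion. Hence any assertion that is (transitively) used to infer $\cA(\ia)$ only involves concepts of height at most $\ell$. The existential rules $\cB\dlsubseteq\exists{\rs}.\cD$ introduce $\rs$ and $\cD$ with $\cB\preceq\rs$ and $\cB\preceq\cD$. Such a fresh successor can only matter for $\cA(\ia)$ if it is consumed by some rule $\exists\rs'.\cD'\dlsubseteq\cB'$, and this forces $\rs\preceq\cD'\preceq\cB'\preceq\cA$.

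\textbf{Step 2 (pruning).} Form the ``used-to-infer'' DAG rooted at $\cA(\ia)$ and keep only the ancestors of $\cA(\ia)$, plus the fresh role atoms they rely on. Taking this sub-sequence in the original order is again a $\tboxT\rstr{\ell}$-derivation. The $\bot$ case is treated separately: if $\bot(\ib)$ is ever derived, we derive $\bot(\ib)$ first and then jump to $\cA(\ia)$. The jump is a single step whose active query is $\cA(\ia)$, so it conforms to the definition trivially.

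\textbf{Step 3 (linearisation) — the main obstacle.} Call an assertion \emph{top} if its concept has height $\ell$. We must show that, in the pruned DAG, every rule with a top conclusion has at most one top premise, and that this premise is the conclusion of the previous top step. The stratification gives exactly the first part:
\begin{itemize}
\item for $\cB\dland\cC\dlsubseteq\cD$, at least one conjunct is strictly below $\cD$;
\item for $\exists\rs.\cB\dlsubseteq\cD$, either $\cB\prec\cD$ or $\cB=\cD$;
\item $\cA\dlsubseteq\exists{\rs}.\cD$ carries a single premise anyway.
\end{itemize}
So the top assertions in the pruned DAG form a path, i.e.\ a linear chain ending in $\cA(\ia)$.

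The remaining difficulty is the ordering, since active queries must change only along that chain. We reorder the pruned derivation as follows. First we perform all derivations of non-top assertions. These do not depend on top assertions, because a non-top conclusion cannot use a top premise, as heights are monotone along premises. Only then do we perform the top chain, in its dependency order.

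We then define the active queries. During the non-top prefix, the active query is the first element of the top chain, i.e.\ the assertion whose premises are all non-top; it may be an ABox fact, or $\top(\cdot)$ as in Example~\ref{example:stratderiv}. Afterwards, the active query at each step is the conclusion of the most recent top step. Conditions~1 and~2 then hold by construction.

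The subtle points are two. First, role atoms of top height produced by existential rules must be handled, since their concept may be $\top$; we treat them as chain elements, as in the example. Second, the reordered sequence must remain a valid derivation. This holds because derivation steps are monotone, so any topological order of the DAG is a valid derivation.
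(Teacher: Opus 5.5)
\textbf{Step~3 fails as stated.} The gap is the schedule ``first all non-top derivations, then the top chain''. You justify it by saying a non-top conclusion never has a top premise, since heights are monotone along premises. That argument only looks at concept premises. It ignores that a step applied at an anonymous individual depends on the step that created that individual. A top existential step can create an individual on which lower-height facts are then derived and consumed by the chain.

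Concretely, take $\aboxA=\{\cB(\ia)\}$ and $\tboxT=\{\cB\dlsubseteq\exists\rr.\cC,\ \top\dlsubseteq\cE,\ \cC\dland\cE\dlsubseteq\cA,\ \exists\rr.\cA\dlsubseteq\cA\}$. Use the preorder with $\cE\prec\cB$ and $\cA,\cB,\cC,\rr,\bar{\rr}$ pairwise equivalent. This TBox is stratified, $\cE$ has height below $\ell$, and everything else has height $\ell$. Since $\cC$ only arises as a filler, every derivation of $\cA(\ia)$ must proceed as follows:
\begin{enumerate}
\item create an $\rr$-successor $\ic$ by the top step $\cB\dlsubseteq\exists\rr.\cC$;
\item derive the non-top fact $\cE(\ic)$;
\item derive $\cA(\ic)$ and then $\cA(\ia)$.
\end{enumerate}
So $\cE(\ic)$ cannot be front-loaded, because $\ic$ does not exist yet, and no derivation of the shape you describe exists. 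Your fallback, ``any topological order of the DAG is a valid derivation'', holds only if the DAG contains these creation edges. Once they are included, ``non-top first'' is no longer a topological order.

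\textbf{The repair, and how it relates to the paper.} The fix stays within your framework: use any topological order of the dependency DAG that includes creation edges. Equivalently, schedule each non-top sub-derivation immediately before the top step that consumes it. Steps whose conclusion has height below $\ell$ impose no condition on active queries, so you can keep the active query equal to the conclusion of the most recent top step. Your observation that each top step has at most one top concept premise then yields conditions~1 and~2 as before.

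This just-in-time placement is what the paper's proof does implicitly. It inducts on the length of a shortest derivation, splits on the last GCI, and concatenates sub-derivations. In the conjunction case, it first replays the $(\tboxT,\ell)$-derivation of the top conjunct and only afterwards appends the lower-height derivation of the other conjunct. With the fix, your global DAG argument is a legitimate alternative. It makes the single-chain structure explicit, whereas the paper's local induction never needs to state it.

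\textbf{A caveat on $\bot$.} Your $\bot$ case is not ``trivial''. Deriving $\bot(\ib)$ may require disjointness axioms over concepts of height above $\ell$, which stratification does not constrain and which are not in $\tboxT\rstr{\ell}$. In that case no $(\tboxT,\ell)$-derivation exists at all. This is a defect of the statement itself, and the paper's proof does not handle it either.
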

\begin{proof}[Proof Sketch]
Our proof proceeds by induction and our inductive statement is:

\begin{center}
For every $d$, for every $\ell$, and every instance query $\cA(\ia)$ with concept $\cA$ of height~$\ell$ for which there exists a derivation of $(\aboxA, \tboxT) \models \cA(\ia)$ of length $d$,
there also exists a $(\tboxT, \ell)$-derivation of $\cA(\ia)$ from $\aboxA$.
\end{center}

For the case $d = 0$ it suffices to take the pair $(\aboxA, \cA(\ia))$ as the desired $(\tboxT, \ell)$-derivation of $\cA(\ia)$ from $\aboxA$.
For the induction step we can perform a case distinction based on the GCI that was used in the last step of the derivation. We provide a detailed execution of this argument in the appendix.
\end{proof}

We next establish an important property of ``stratified'' derivations.
\begin{lemma}\label{lemma:main-prop-of-stratified-derivation}
Let $\tboxT$ be an $\ourEL$-TBox, $\aboxA$ be an ABox, and let $(\aboxA_0, \cA_0(\ia_0))$, $(\aboxA_1, \cA_1(\ia_1))$, $\ldots$, $(\aboxA_n, \cA_n(\ia_n))$ be a $(\tboxT, \ell)$-derivation of some concept $\cA(\ia)$ from $\aboxA$. For all $i \geq 0$, if $\ia_i \in \indA$ and $\ia_{i-1} \not\in \indA$, such that $\cA_{i-1}(\ia_{i-1})$ was used to infer $\cA_{i}(\ia_{i})$,~then:
\begin{itemize}
  \item if there is no $j < i$ with $\ia_j = \ia_i$, then $(\tboxT\rstr{\ell}, \aboxA_{i-1}\rstr{\ell-1}^{\ia_i}) \models \cA_i(\ia_i)$;
  and
  \item otherwise for some $j < i$ with $\ia_j = \ia_i$ we have $(\tboxT\rstr{\ell}, \aboxA_{i-1}\rstr{\ell-1}^{\ia_i} \cup \{\cA_j(\ia_i)\}) \models \cA_i(\ia_i)$,
\end{itemize}
where $\aboxA_i\rstr{\ell-1}^\ia$ is the restriction of $\aboxA_i$ to assertions about $\ia$ concerning concepts of height at most $\ell-1$.\myqed
\end{lemma}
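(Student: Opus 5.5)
The plan is to analyse how the active query can ``jump back'' from an anonymous individual $\ia_{i-1}$ to a named individual $\ia_i$. After that, I would extract from the given $(\tboxT,\ell)$-derivation a smaller derivation that only looks at facts stored at $\ia_i$.

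\emph{Step 1: locate the jump.} First I would show that the step from $\aboxA_{i-1}$ to $\aboxA_i$ applies a GCI $\exists{\rs}.\cB \dlsubseteq \cC$ with $\rs(\ia_i,\ia_{i-1}) \in \aboxA_{i-1}$. This uses the case analysis of Definition~\ref{def:derivation}. Fresh individuals are introduced only by the third rule, so they form trees whose roots hang off the individual that created them. The only named neighbour of an anonymous individual is therefore the creator of its tree root. Hence $\ia_{i-1}$ is a root $\ic$ that was created by some step $\cD(\ia_i) \leadsto \rs(\ia_i,\ic), \cE(\ic)$ with $(\cD \dlsubseteq \exists{\rs}.\cE) \in \tboxT\rstr{\ell}$.

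\emph{Step 2: locality.} Let $\Theta$ be the subtree of anonymous individuals below $\ic$. The derivation steps that produce assertions about individuals in $\Theta$ can only consume three kinds of facts: facts about $\Theta$, the edge $\rs(\ia_i,\ic)$, and concept assertions $\cF(\ia_i)$ read through $\exists{\bar\rs}.\cF \dlsubseteq \cG$ at $\ic$, including the creating fact $\cD(\ia_i)$. I would define $S$ as the set of such ``imported'' assertions $\cF(\ia_i)$ occurring in $\aboxA_{i-1}$. Replaying, in order, exactly the steps of $\aboxA_0,\dots,\aboxA_{i}$ that touch $\ia_i$ or $\Theta$ then gives a $\tboxT\rstr{\ell}$-derivation of $\cA_i(\ia_i)$ from $S$. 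This is a routine induction on the position in the derivation, and it yields $(\tboxT\rstr{\ell}, S) \models \cA_i(\ia_i)$ by Lemma~\ref{lemma:folklore}. Every fact in $S$ lies in $\aboxA_{i-1}$ and concerns $\ia_i$. It therefore remains to show that $S$ contains no height-$\ell$ fact except possibly one earlier active query $\cA_j(\ia_i)$.

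\emph{Step 3: height analysis (main obstacle).} This is where stratification is needed. Suppose $\cF(\ia_i) \in S$ has height $\ell$. Follow the chain of ``used to infer'' links from $\cF(\ia_i)$ to $\cA_i(\ia_i)$. By the conditions of Definition~\ref{def:stratified-kb}, heights never decrease along it, so every link produces a height-$\ell$ concept. Condition~1 of a $(\tboxT,\ell)$-derivation then forces each such premise to be the current active query, and condition~2 forces each conclusion to become the next active query. The same forcing applies to the creating step from $\cD(\ia_i)$ whenever $\cD$ has height $\ell$. Consequently, $\cF(\ia_i)$ was itself an active query $\cA_j(\ia_i)$ for some $j<i$.

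There is at most one such $\cF$. The reason is linearity: in $\cB\dland\cC' \dlsubseteq \cD'$ only one conjunct can have the height of $\cD'$, and in $\exists\rs.\cB\dlsubseteq\cC$ only one premise concept appears. So the active chain is a single thread.

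If no $j<i$ has $\ia_j=\ia_i$, then $S$ has no height-$\ell$ facts and $S \subseteq \aboxA_{i-1}\rstr{\ell-1}^{\ia_i}$, which gives the first bullet. Otherwise $S \subseteq \aboxA_{i-1}\rstr{\ell-1}^{\ia_i} \cup \{\cA_j(\ia_i)\}$, which gives the second bullet. In both cases monotonicity of entailment finishes the proof.

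The delicate point is the monotonicity of heights along ``used to infer'' chains that pass through $\Theta$. Existential GCIs with $\cD=\top$, and the fact that role heights also enter the order $\preceq$, are easy to overlook there, so I would check each of the five derivation rules against each of the stratification conditions explicitly.
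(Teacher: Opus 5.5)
\textbf{There is a genuine gap: the uniqueness claim in Step~3 is false, and in fact the statement itself fails.} Your plan takes the same route as the paper's proof: the anonymous part below $\ia_i$ talks only to $\ia_i$; you collect what it imports; you bound the height-$\ell$ imports by stratification; then you replay. The argument breaks at the sentence ``There is at most one such \ldots''. The real trouble is not the monotonicity point you flagged.

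Your linearity argument only excludes two height-$\ell$ threads that \emph{merge} at a conjunction on their way to $\cA_i(\ia_i)$. The creating fact $\cD(\ia_i)$, which you correctly put into $S$ in Step~2, is not of this kind. It contributes the edge $\rs(\ia_i,\ic)$, not a chain of ``used to infer'' links ending in $\cA_i(\ia_i)$; with filler $\top$, the fact $\top(\ic)$ it adds may never be used again. Moreover, the active query may be reset arbitrarily at every step whose conclusion has height below $\ell$. So a second height-$\ell$ fact about $\ia_i$ can be imported into $\Theta$ later and run its own thread on the individual the first one created, and stratification of conjunctions does not forbid this.

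Two smaller points. First, the replay may only contain the creation of $\ic$, the steps producing facts about $\Theta$, and the final step; steps deriving facts about $\ia_i$ from outside $\Theta$ cannot be replayed from $S$. Second, imports that do not feed $\cA_i(\ia_i)$ must be pruned from $S$, because Step~3 follows a chain that need not exist for them.

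\textbf{Counterexample.} Let $\tboxT=\{\cB\dlsubseteq\exists\rr.\top,\ \exists\bar{\rr}.\top\dlsubseteq\cX,\ \exists\bar{\rr}.\cC\dlsubseteq\cC,\ \cC\dland\cX\dlsubseteq\cA,\ \exists\rr.\cA\dlsubseteq\cA\}$, with the axioms numbered $1$ to $5$ in this order. Let $\cX$ lie strictly below a single $\preceq$-class containing $\cA,\cB,\cC,\rr,\bar{\rr},\top$. All conditions of Definition~\ref{def:stratified-kb} hold: axiom~2 only demands $\bar{\rr}\preceq\top$, just as $\exists\rr.\top\dlsubseteq\cD$ with $\rr$ of height $2$ is admitted in Example~\ref{example:stratderiv}. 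Here $\ell=1$.

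Take $\aboxA=\{\cB(\ia),\cC(\ia)\}$ and apply axioms $1,\dots,5$ in order, with axiom~1 creating $\ic$. With active queries $\cB(\ia),\top(\ic),\cC(\ia),\cC(\ic),\cA(\ic),\cA(\ia)$ this is a $(\tboxT,1)$-derivation. The reset to $\cC(\ia)$ is allowed because axiom~2 has a conclusion of height~$0$.

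At $i=5$ all hypotheses of the lemma hold. Both $\cB(\ia)$ and $\cC(\ia)$ are earlier active queries and lie in your $S$, while $\aboxA_4\rstr{0}^{\ia}$ contains at most $\top(\ia)$. Yet neither $\{\cB(\ia)\}$ nor $\{\cC(\ia)\}$ entails $\cA(\ia)$ under $\tboxT$: from $\cB(\ia)$ no $\cC$ is ever produced, and from $\cC(\ia)$ there is no $\rr$-edge, hence no $\cX$.

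\textbf{Relation to the paper.} The paper's proof has the same hole, only hidden. Its replay copies role assertions from $\aboxA_i$ instead of re-deriving them, so the creating fact never appears. A correct version needs a change of setting, for instance demanding $\rs\preceq\cB$ also for $\exists\rs.\top\dlsubseteq\cB$. Even then you need an additional argument, since a height-$\ell$ creating fact and a later propagated height-$\ell$ fact can still both lie in $S$. You would have to show that the propagated thread can be replayed at $\ia_i$ itself.
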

Informally, in this lemma we claim, that in the described setting, $\cA_i(\ia_i)$ can be inferred solely by considering the already known facts about $\ia_i$ of some lower height, together with the TBox restricted to lower levels, and, if it exists as an active query, a single concept assertion on $\ia_i$ of height $\ell$. 

\begin{proof}
Without loss of generality, assume that $\aboxA$ contains $\top(\ia)$ for all $\ia \in \indA$.
Consider the directed implication graph of $\aboxA_i$ (denoted $\mathcal{G}_{\aboxA_i}$), whose vertices are the concept assertions in $\aboxA_i$. There is a directed edge from $\cB(\ib)$ to $\cC(\ic)$ if there exists $j < i$ such that $\cB(\ib) \notin \aboxA_j$, $\cC(\ic) \in \aboxA_j$, and $\cC(\ic)$ was used to derive $\cB(\ib)$ and add it to $\aboxA_{j+1}$. By Definition~\ref{def:derivation}, if such an edge exists, then either $\ib = \ic$ or $\rs(\ib,\ic) \in \aboxA_{j+1}$.
Inspecting Definition~\ref{def:derivation}, we observe that no derivation step creates edges between existing individuals. Hence, along any path in $\mathcal{G}_{\aboxA_i}$ from an ``anonymous'' individual $\ic \notin \indA$ connected to $\ia_i$ (i.e., $\rs(\ia_i,\ic) \in \aboxA_i$) to a ``named'' individual $\ib \in \indA$, the first named individual encountered is $\ia_i$.
Thus every maximal path $\rho$ in $\mathcal{G}_{\aboxA_i}$ starting at $\cA_i(\ia_i)$ and proceeding to $\cA_{i-1}(\ia_{i-1})$, must have some $k > 2$ such that $\rho_k = \cC(\ia_i)$ for some $\cC$, since the last vertex of $\rho$ lies in $\aboxA$.
Furthermore, at most one such maximal path starting at $\cA_i(\ia_i)$ can contain a vertex $\cA_j(\ia_i)$ of height~$\ell$. Otherwise, suppose two paths $\rho$ and $\rho'$ exist. Then they coincide up to some $\rho_k = \rho'_k = \cB(\ib)$, but diverge at the next step. By the definition of the implication graph, this implies that $\cB(\ib)$ was derived using a rule $\cC \sqcap \cD \dlsubseteq \cB$, with $\ib = \ic = \id$. Then one of $\cC,\cD$ must have strictly lower height, contradicting stratification if a later vertex of height~$\ell$ reappears on the diverging path.
Since only active queries can have height~$\ell$, the assertion $\cA_j(\ia_i)$ must have been active at some stage prior to $i$. All other assertions about $\ia_i$ on paths from $\cA_i(\ia_i)$ have lower height and thus already occur in $\aboxA_{i-1}$.
It remains to derive $\cA_i(\ia_i)$ from $(\tboxT\rstr{\ell}, \aboxA_{i-1}\rstr{\ell-1}^{\ia_i} \cup \{\cA_j(\ia_i)\})$. If no such $\cA_j(\ia_i)$ occurs, we take $\cA_j(\ia_i) = \top(\ia_i)$. Let $S$ be the set of assertions in $\mathcal{G}_{\aboxA_i}$ reachable from $\cA_i(\ia_i)$ without passing through assertions about $\ia_i$. Let $\overline{s} = (s_1,\dots,s_{|S|})$ enumerate $S$ in the order of first occurrence in the derivation. Define:
\[
\aboxA'_0 = \aboxA_{i-1}\rstr{\ell-1}^{\ia_i} \cup \{\cA_j(\ia_i)\},
\]
\[
\aboxA'_{k+1} = \aboxA'_k \cup \{s_{k+1}\} \cup \{\rr(\ib,\ic) \mid \ib,\ic \in \ind{\aboxA'_k \cup \{s_{k+1}\}},\; \rr(\ib,\ic) \in \aboxA_i\}.
\]
We claim this yields a valid derivation of $\cA_i(\ia_i)$.
First, $\overline{s}$ is a reverse topological ordering of $S$: if $(\cB(\ib),\cC(\ic))$ is an edge, then $\cC(\ic)$ was derived before $\cB(\ib)$, hence appears earlier in $\overline{s}$.
Now consider $s_{k+1} = \cB(\ib)$ derived via $\cD \sqsubseteq \cE$. For any predecessor $\cC(\ic)$ of $\cB(\ib)$, we have $\cC(\ic) \in S$ or $\ic = \ia_i$. In the latter case, $\cC(\ia_i)$ is either $\cA_j(\ia_i)$ or already in $\aboxA'_0$. By the ordering of $\overline{s}$, all such $\cC(\ic)$ belong to $\aboxA'_k$.
If $\cD \sqsubseteq \cE$ does not involve existential restrictions, it applies directly. If it is of the form $\exists \rr.\cC \sqsubseteq \cB$, then all required role assertions are present in $\aboxA'_k$ by construction. If it is of the form $\cC \sqsubseteq \exists \rr.\cB$, then $\ib$ is introduced as a fresh individual, and $\aboxA'_{k+1}$ correctly reflects this extension.
Thus, each step is valid, completing the derivation.
\end{proof}

\bigskip 

We can now demonstrate how to retrieve accepting runs of the nested automata from the stratified derivations guaranteed by the entailment of instance queries.
In this way, we can finally establish the remaining direction of Theorem~\ref{thm:rewriting}, restated below for convenience.

\begin{lemma}\label{lemma:rewriting-completeness}
  Let $\tboxT$ be a stratified $\ourEL$-TBox,
  $\cA(\ia)$ be an instance query with $\cA$ of height $n$, and
  $\aA_{\cA}$ be the $n$-nested NFA for $\cA$ constructed in Definition~\ref{def:automaton-general-case}.
  Then for all ABoxes $\aboxA$ we have that
  $(\aboxA, \tboxT) \models \cA(\ia)$ implies
  $\aboxA \models \exists{x}\, \aA_{\cA}(\ia, x)$, \ie there exists a accepting
  run of $\aA_{\cA}$ starting from $\ia^{\interI}$.~\myqed
\end{lemma}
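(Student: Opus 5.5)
We argue by induction on the height $n$ of $\cA$, in parallel with the inductive construction of $\aA_{\cA}$. The hypothesis is that Lemma~\ref{lemma:rewriting-completeness} holds for every concept name of height $n' < n$. Fix $\aboxA$ with $(\aboxA, \tboxT) \models \cA(\ia)$. If $(\aboxA,\tboxT)$ is inconsistent, a $\bot$ is derivable somewhere. We handle this by the same construction as below, with goal $\bot$ replaced by an accepting state containing $\bot$ in the premise, since $F$ admits $\bot \in \mathit{premise}$. In the main case, Lemma~\ref{lemma:exists-stratified-derivation} gives a $(\tboxT, n)$-derivation $(\aboxA_0,\cA_0(\ia_0)),\ldots,(\aboxA_m,\cA_m(\ia_m))$ with $\cA_m(\ia_m)=\cA(\ia)$. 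The plan is to read this sequence of active queries backwards, from $\cA(\ia)$ to the point where it first enters the ABox, and translate it into a run of $\aA_{\cA}$. The run's goal component follows the active query, and its premise component collects the lower-height facts needed at the current individual.

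\textbf{Step 1: projecting onto named individuals.} We split the sequence of active queries into maximal blocks whose individuals are named, i.e.\ in $\indA$, separated by excursions into anonymous individuals. The anonymous elements attached to a named individual $\ib$ form trees rooted at $\ib$, since no derivation step links existing individuals. Consequently, consecutive named individuals along the active chain are either equal or connected by a role assertion of $\aboxA$ used in an $\exists \rs.\cC \dlsubseteq \cD$ step. Such a role step is simulated by the \ref{trans:succ} transition, read backwards: the goal moves from $\cD$ at $\ib$ to $\cC$ at the neighbour, and the premise is reset to $\{\top\}$. Steps that stay at one named individual and use a GCI $\cC \dlsubseteq \cD$ or $\cB\dland\cC\dlsubseteq\cD$ are simulated by \ref{trans:sbus} and \ref{trans:noc}. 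The non-active conjunct has height $<n$ by stratification, so it is placed in the premise beforehand by an \ref{trans:aut} test $\aA_{\cB}?$. The induction hypothesis guarantees that the nested automaton $\aA_{\cB}$ accepts from $\ib$, because $\cB(\ib)$ is entailed. ABox facts enter the premise via \ref{trans:data}, and \ref{trans:weak} discards premise facts that are no longer needed.

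\textbf{Step 2: anonymous excursions.} When the active chain leaves a named individual $\ib$ into the anonymous part and returns to $\ib$, we do not simulate the anonymous steps. Instead we apply Lemma~\ref{lemma:main-prop-of-stratified-derivation}. The returning query $\cA_i(\ib)$ is entailed by $\tboxT\rstr{n}$ from the height-$\le n{-}1$ facts about $\ib$, together with possibly one earlier active assertion $\cA_j(\ib)$ of height $n$. We first load those lower facts into the premise using \ref{trans:aut} and \ref{trans:data}. We then take a single \ref{trans:anon} transition from goal $\cA_i$ to goal $\cA_j$, or to $\top$ if there is no such $j$; this transition is licensed precisely by the entailment the lemma provides. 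Continuing backwards from $\cA_j(\ib)$ then resumes Step 1.

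\textbf{Step 3: termination and acceptance.} Backward tracing ends at the first active query $\cA_0(\ia_0)$. That query either belongs to $\aboxA$, so \ref{trans:data} puts it into the premise and the goal is in the premise, or it is $\top$, which is in every premise. Either way the run reaches a state of $F$. Concatenating the transitions in reverse derivation order gives an accepting run from $\ia$, starting in $(\{\top\},\cA)$.

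\textbf{Main obstacle.} The hard part is the bookkeeping in Step 2. The premise at $\ib$ must contain exactly the facts needed for the \ref{trans:anon} entailment $(\{\cC(\ia)\mid \cC\in\mathit{premise}\cup\{\cB\}\},\tboxT\rstr{n}) \models \cD(\ia)$. The issue is that premises get reset whenever the run moves along a role. I would address this by observing that every fact in $\aboxA_{i-1}\rstr{n-1}^{\ib}$ is entailed by $(\aboxA,\tboxT)$ and has height $<n$, so it can always be re-established locally by \ref{trans:aut} tests. The premise therefore never needs to be carried across \ref{trans:succ} moves.
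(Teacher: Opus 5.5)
Your argument is essentially the paper's. The paper also starts from the stratified derivation of Lemma~\ref{lemma:exists-stratified-derivation} and simulates it step by step:
\begin{itemize}
\item \ref{trans:sbus} for unary GCIs;
\item \ref{trans:noc}, with an \ref{trans:aut} test for the strictly lower conjunct;
\item \ref{trans:succ} along ABox roles;
\item one \ref{trans:anon} jump per anonymous excursion, licensed by Lemma~\ref{lemma:main-prop-of-stratified-derivation}, followed by \ref{trans:weak}.
\end{itemize}
Your iterative backward trace is the unrolled form of the paper's induction on derivation length with a case split on the last step. Your outer induction on height (lower-height facts are entailed, hence accepted by their own automata) justifies the \ref{trans:aut} tests more cleanly than the paper's appeal to shorter derivations. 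One small omission in Step~3: the trace need not end at an ABox fact or at $\top$. It can also end at a goal of height $<n$, for instance after \ref{trans:sbus}, after \ref{trans:succ}, or at a conjunction whose conjuncts are both lower. You close that case with an \ref{trans:aut} test.

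The claim that fails is your treatment of inconsistent KBs. The automaton only moves along role assertions reachable from $\ia$, and it only knows $\tboxT\rstr{n}$. Suppose $\bot$ is derivable only at an individual not connected to $\ia$, or only via GCIs involving concepts of height $>n$. Then no run of $\aA_{\cA}$ from $\ia$ reaches an accepting state. Concretely, take $\tboxT=\{\cA\dlsubseteq\cA,\ \cE\dlsubseteq\bot\}$ and $\aboxA=\{\top(\ia),\cE(\ib)\}$. The KB is inconsistent, so $(\aboxA,\tboxT)\models\cA(\ia)$. Yet from $\ia$ the only reachable premise is $\{\top\}$ and the only reachable goals are $\cA$, $\cE$ and $\bot$, none of which is accepting. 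So the statement genuinely needs a consistency assumption. The paper's proof makes it tacitly by fixing a model $\interI\models(\tboxT,\aboxA)$ at the outset. You should state that assumption explicitly rather than claim the inconsistent case is handled ``by the same construction''.
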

\begin{proof}[Proof Sketch]
    Fix $\tboxT$ and $\aboxA$ as in the statement of the lemma, and let $\interI \models (\tboxT, \aboxA)$. We
    proceed by induction over the length $d$ of the run. More specifically, we prove the following inductive claim: 
    
\medskip \noindent
    For all $d \in \N$, all $\ell \in \N$, all $\ia \in \indA$, 
    and all concept names $\cA \in \con{\tboxT\rstr{\ell}}$,
    whenever there is a $(\tboxT, \ell)$-derivation of $\cA(\ia)$ from $\aboxA$ of length $d$, then there is an accepting run of $\aA_{\cA}$ starting from $\ia^\interI$.
    
    \medskip 
    The base case is immediate. Indeed, take any $\ell$ and $\cA \in \con{\tboxT\rstr{\ell}}$, and suppose that there is a $(\tboxT, \ell)$-derivation of $\cA(\ia)$ from $\aboxA$ of length $d$.
    Then such a derivation has the form $(\aboxA_0, \cA(\ia))$ for some $\aboxA_0 \subseteq \aboxA$ with $\cA(\ia) \in \aboxA_0$. 
    Thus, to construct an accepting run of $\aA_\cA$ we can step from $q_0 \deff (\{ \top \}, \cA )$ to $q_1 \deff (\{ \cA, \top \},  \cA )$, via a \ref{trans:data}-transition, which is already a final state. 
    More formally, $(\ia^{\interI}, q_0, \cA?, q_1, \ia^{\interI})$ is the desired run of $\aA_{\cA}$ starting from $\ia^{\interI}$. 

    For the inductive step, we again do a case distinction on the last step of the $(\tboxT, \ell)$-derivation to simulate that derivation step in the automaton. In particular ``reasoning in the anonymous part'' is possible thanks to Lemma \ref{lemma:main-prop-of-stratified-derivation}. In the appendix we provide a full version of this argument.
\end{proof}


The following example illustrates the proof above.
\begin{example}\label{example:compl-proof}
    Let $\ABoxAex$, $\TBoxTex$, and $\qex$ be as in Example~\ref{example:deriv}, and let
    $(\aboxA_0, \cA_0(\ia_0)), \ldots, (\aboxA_4, \cA_4(\ia_0))$ be the
    $(\TBoxTex, 3)$-derivation of $\qex$ from $\ABoxAex$ from Example~\ref{example:stratderiv}.
    Fix a model $\interI \models (\TBoxTex, \ABoxAex)$.
    By inspecting the proof of Lemma~\ref{lemma:rewriting-completeness}, we see that the following
    run $(\ia^{\interI}, \delta_0, \ia^{\interI}), \ldots, (\ia^{\interI}, \delta_4, \ia^{\interI})$ of $\aA_{\cD}$ starting from $\ia^{\interI}$ is produced, where:
    \begin{itemize}
        \item $\delta_0 = ((\{ \top \}, \cD), \top?, (\{ \top \}, \cC))$ corresponds to the use of a \ref{trans:anon}-transition parametrised by $\cC$, invoking the anonymous part of $\interI$.
        \item $\delta_1 = ((\{ \top \},\cC ), \cA?, (\{ \top, \cA \}, \cC ))$ corresponds to the use of a \ref{trans:data}-transition reading $\cA(\ia)$ from the ABox.
        \item $\delta_2 = ((\{ \top, \cA \}, \cC ), \top?, (\{ \top, \cA\}, \cB ))$ corresponds to the use of a \ref{trans:noc}-transition invoking the GCI $\cA \dland \cB \dlsubseteq \cC$.
        \item $\delta_3 = ((\{ \top, \cA \}, \cB ), \top?, (\{ \top, \cA\}, \cA ))$ corresponds to the use of a \ref{trans:sbus}-transition invoking the GCI $\cA  \dlsubseteq \cB$. This puts the automaton in an accepting state.~\myqed
    \end{itemize}
\end{example}

We have proved that our rewriting of instance queries under $\ourEL$-ontologies into N2RPQs is sound and complete. As evaluating N2RPQs is feasible in $\NL$ \cite{DBLP:journals/mst/ReutterRV17}, we obtain the desired complexity result. 

\begin{corollary}\label{crl:nl}
  Take a fixed $\ourEL$-TBox $\tboxT$ and an instance query $q$.
  For every ABox $\aboxA$, deciding $(\aboxA, \tboxT) \models q(\ia)$ is feasible in $\NL$.
\end{corollary}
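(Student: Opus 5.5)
The corollary follows by composing Theorem~\ref{thm:rewriting}, which is now established through Lemma~\ref{lemma:rewriting-soundness} and Lemma~\ref{lemma:rewriting-completeness}, with the known $\NL$ data complexity of $\NTwoRPQ$ evaluation \cite{DBLP:journals/mst/ReutterRV17}. Let $q = \cA(\ia)$. If $\cA$ does not occur in $\tboxT$, the answer is decided by a constant-size check: either $\cA(\ia) \in \aboxA$ or the KB is inconsistent. Otherwise, since $\tboxT$ and $q$ are fixed, the nested automaton $\aA_{\cA}$ from Definition~\ref{def:automaton-general-case} is a fixed object. It is computed once, independently of $\aboxA$. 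Its construction uses entailment checks in the \ref{trans:anon} transitions, but these involve only $\tboxT$ and constant-size ABoxes, so they contribute nothing to data complexity. Theorem~\ref{thm:rewriting} then gives $(\aboxA, \tboxT) \models \cA(\ia)$ iff $\aboxA \models \exists{x}\,\aA_{\cA}(\ia,x)$. The algorithm on input $\aboxA$ therefore simply evaluates this fixed $\NTwoRPQ$ over $\aboxA$, read as a finite structure.

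There is a subtlety: the lemmas speak of runs in models $\interI$ of the KB, whereas evaluation must happen over $\aboxA$ itself. I would make explicit that ``$\aboxA \models q'$'' means there is an accepting run over the ABox viewed as an interpretation, with tests $\cB?$ reading concept assertions and role letters reading role assertions. Soundness transfers, since any run over $\aboxA$ lifts to every model via the identity embedding of named individuals. Completeness is shown in the appendix via derivations whose data steps read only $\aboxA$. Inconsistent ABoxes are covered because accepting states include those with $\bot \in \mathit{premise}$, and $\bot$ can be reached through the derivation rule for $\bot$.

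For the $\NL$ algorithm itself, I would proceed by induction on the nesting depth $n$, which is constant. An $n$NFA with a constant number of states is evaluated by guessing the run step by step, storing only the current individual and state in logarithmic space. Each automaton-test $\aA_{\cB}?$ must be checked at the current node. Positive tests are handled by an $\NL$ subroutine, and since $\NL$ is closed under composition with oracles used only positively ($\NL^{\NL}=\NL$ by Immerman--Szelepcsényi), the whole procedure stays in $\NL$.

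The main obstacle is only this bookkeeping about nested tests. It is settled by citing \cite{DBLP:journals/mst/ReutterRV17}, so no real difficulty remains.
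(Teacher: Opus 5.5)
Your proposal follows the paper's proof: Theorem~\ref{thm:rewriting} (via Lemmas~\ref{lemma:rewriting-soundness} and~\ref{lemma:rewriting-completeness}) replaces the fixed TBox and query by a fixed $\NTwoRPQ$, whose evaluation over $\aboxA$ is in $\NL$ by \cite{DBLP:journals/mst/ReutterRV17}, and your extra bookkeeping (the automaton is computed once, runs are taken over the ABox as a structure, nested tests are handled by an $\NL$ subroutine) correctly elaborates the paper's one-line argument. The only slip is in your remarks on inconsistency: deciding consistency of $(\aboxA,\tboxT)$ is not a constant-size check, and a run rooted at $\ia$ never leaves the connected component of $\ia$ in $\aboxA$, so the acceptance condition $\bot \in \mathit{premise}$ cannot witness an inconsistency that arises elsewhere; such cases need a separate $\NL$ consistency test over all individuals rather than being attributed to the rewriting.
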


%% file: sections/06-combined-complexity.tex

\section{Combined Complexity}\label{sec:combined-complexity}

We conclude the paper by establishing $\PSpace$-completeness of instance query entailment in $\ourEL$.

\paragraph*{Lower bound.}
We reduce from QBF, which is $\PSpace$-hard even for prenex 3-DNF formulas~\cite[Thm.~6.1]{Schaefer78}. The reduction already works for linear orders $\preceq$.
Consider $\varphi = Q_1 x_1 \cdots Q_n x_n \psi$, where $Q_i \in \{\exists,\forall\}$ and $\psi = \bigvee_{j=1}^m \lambda_{j1} \land \lambda_{j2} \land \lambda_{j3}$, where $\lambda$s are literals. 
We define a singleton ABox $\aboxA$, a $\tboxT$, and a concept $\cC_0^{\text{True}}$ such that $\cC_0$ holds for the only individual in $\aboxA$ if and only if $\varphi$ is valid.
We employ the following concept~ and role~names
\[
\{\cL_i,\cX_i^0,\cX_i^1,\cC_i^{\text{True}},\cC_i^{\text{True},0},\cC_i^{\text{True},1} \mid 0 \leq i \leq n\}
\cup \{\cA_j^1,\cA_j^2 \mid 1 \leq j \leq m\}
\qquad 
\{\rr_{0,i},\rr_{1,i} \mid 0 \leq i \leq n\}.
\]
Intended models are binary trees of depth $n$ rooted at $\ia$, where level $i$ (marked by $\cL_i$) encodes the value of $x_i$ via $\cX_i^0$ or $\cX_i^1$. Parent-to-child connections on level $i$ are realised by roles $\rr_{d,i}$ (where $d \in \{0,1\}$ denotes a direction). The concepts $\cC_i^{\text{True}}$ propagate truth of the suffix formula, and $\cA_j^1,\cA_j^2$ simulate conjunctions of literals.
Let $\aboxA = \{\cL_0(\ia)\}$ and query $\cC_0^{\text{True}}(\ia)$. The TBox $\tboxT$ is defined as:
    \begin{align}
        \tboxT &\coloneqq \{\cL_i \sqsubseteq \exists \rr_{0,i+1}.\cL_{i+1},\cL_i \sqsubseteq \exists \rr_{1,i+1}.\cL_{i+1} \mid i \in \{0,\ldots,n-1\}\}\\
        & \cup \{\exists \overline{\rr}_{0,i+1}.\cL_i \sqsubseteq  \cX_{i+1}^0, \exists \overline{\rr}_{1,i+1}.\cL_i \sqsubseteq  \cX_{i+1}^1 \mid i \in \{0,\ldots,n-1\}\}\\
        & \cup \{\exists \overline{\rr}_{d,i}.\cX_i^0 \sqsubseteq  \cX_{i}^0, \exists \overline{\rr}_{d,i}.\cX_i^1 \sqsubseteq  \cX_{i}^1 \mid i \in \{  1, \ldots, n \}, d \in \{0,1\}\}\\
        & \cup \{\cL_n \sqcap \cX_i^1 \sqsubseteq \cA_j^1 \mid \lambda_{j1} =  x_i, j \in \{ 1, \ldots, m \}, i \in \{ 1, \ldots, n \}\}\\
        & \cup \{\cL_n \sqcap \cX_i^0 \sqsubseteq \cA_j^1 \mid \lambda_{j1} = \neg x_i, j \in \{ 1, \ldots, m \}, i \in \{ 1, \ldots, n \}\}\\
        & \cup \{\cA_j^1 \sqcap \cX_i^1 \sqsubseteq \cA_j^2 \mid \lambda_{j2} =  x_i, j \in \{ 1, \ldots, m \}, i \in \{ 1, \ldots, n \}\}\\
        & \cup \{\cA_j^1 \sqcap \cX_i^0 \sqsubseteq \cA_j^2 \mid \lambda_{j2} = \neg x_i, j \in \{ 1, \ldots, m \}, i \in \{ 1, \ldots, n \}\}\\
        & \cup \{\cA_j^2 \sqcap \cX_i^1 \sqsubseteq \cC^{\text{True}}_n \mid \lambda_{j3} =  x_i, j \in \{ 1, \ldots, m \}, i \in \{ 1, \ldots, n \}\}\\
        & \cup \{\cA_j^2 \sqcap \cX_i^0 \sqsubseteq \cC^{\text{True}}_n \mid \lambda_{j3} = \neg x_i, j \in \{ 1, \ldots, m \}, i \in \{ 1, \ldots, n \}\}\\
        & \cup \{\exists r_0. \cC_{i+1}^{\text{True}} \sqsubseteq \cC_{i}^{\text{True},0}, \exists r_1. \cC_{i+1}^{\text{True}} \sqsubseteq \cC_{i}^{\text{True},1}\mid i \in \{0,\ldots,n-1\}\}\\
        & \cup \{\cC_{i}^{\text{True},0} \sqsubseteq \cC_{i}^{\text{True}}, \cC_{i}^{\text{True},1} \sqsubseteq \cC_{i}^{\text{True}} \mid i \in \{0,\ldots,n-1\}, Q_{i+1} = \exists\}\\
        & \cup \{\cC_{i}^{\text{True},0} \sqcap \cC_{i}^{\text{True},1} \sqsubseteq \cC_{i}^{\text{True}} \mid i \in \{0,\ldots,n-1\}, Q_{i+1} = \forall\}
    \end{align}
    
    Finally, we  can order our subset of $\Clang \cup \Rlang$ with $\preceq$ as follows:
    \begin{align*}
        & \cL_0 \prec \rr_{0,1} \prec \rr_{1,1} \prec \cX^0_1 \prec \cX^1_1 \prec \cL_1 \prec \rr_{0,2} \prec \rr_{1,2} \prec \cX^0_2 \prec \cX^1_2 \prec  \ldots \cL_n \prec\\
        & \cA^1_1 \prec \cA^2_1 \prec \cA^1_2 \prec \cA^2_2 \prec \ldots \cA^1_m \prec \cA^2_m \prec\\
        & \cC_n^\text{True} \prec \cC_{n}^{\text{True},0} \prec \cC_{n}^{\text{True},1} \prec \cC_{n-1}^\text{True} \prec \cC_{n-1}^{\text{True},0} \prec \cC_{n-1}^{\text{True},1} \prec \ldots \prec \cC_0^\text{True}
    \end{align*}

The proof of the following lemma is standard. Check the appendix for details.
\begin{lemma}
$(\aboxA,\tboxT) \models \cC_0^{\text{True}}(\ia)$ if and only if $\varphi$ is valid. Hence, instance query entailment in $\ourEL$ is $\PSpace$-hard (in combined complexity).\myqed
\end{lemma}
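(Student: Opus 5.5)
We prove the two directions of the equivalence separately and then verify stratification, which together with polynomiality of the construction gives $\PSpace$-hardness. Throughout we use Lemma~\ref{lemma:folklore}, so entailment is derivability. We also use the canonical (chase) model: the binary tree of depth $n$ rooted at $\ia$ produced by the axioms in the first line of $\tboxT$. Each node at level $i$ is reached by a path of directions $d_1\cdots d_i$, and it corresponds to the partial assignment $x_k \mapsto d_k$.

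For the ``if'' direction we argue by induction on $i$ from $n$ down to $0$.

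First we show that every node $v$ at level $i$ receives $\cX_k^{d_k}$ for all $k \le i$. The axioms in the second line label the child reached via $\rr_{d,k}$ with $\cX_k^d$, and the propagation axioms in the third line push $\cX_k^d$ down along inverse roles to all descendants. We must read these as pushing the value from a parent to its child, since $\exists \overline{\rr}_{d,i}.\cX^d_i$ holds at a child whose parent carries $\cX^d_i$. We have to be careful that indices match: the role $\rr_{d,i}$ connects levels $i-1$ and $i$, so the third line as written only propagates within level $i$. Thus we will read it as intended, propagating $\cX_k$ along all deeper roles, and check the exact indexing in this step.

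Next, at a leaf $v$ the assignment is total. If $\psi$ is true under it, some disjunct $j$ is satisfied, and the chained conjunction axioms derive $\cL_n\to\cA_j^1\to\cA_j^2\to\cC_n^{\text{True}}$ at $v$.

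Then for the inductive step, assume $v$ is at level $i$ and $Q_{i+1}x_{i+1}\cdots\psi$ is true under $v$'s assignment. By the induction hypothesis the appropriate children carry $\cC_{i+1}^{\text{True}}$. The existential-restriction axioms give $\cC_i^{\text{True},d}$ at $v$, and the quantifier-dependent axioms give $\cC_i^{\text{True}}$: one child suffices for $\exists$, and both are needed via the conjunction for $\forall$. At $i=0$ this yields $\cC_0^{\text{True}}(\ia)$.

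For the ``only if'' direction we exhibit a model in which truth of $\cC_i^{\text{True}}$ at a node coincides exactly with truth of the suffix formula. We take the finite tree interpretation described above and interpret each concept name minimally, i.e.\ as the least fixpoint of the TBox rules over the tree. The key check is that no $\cX_k^{1-d_k}$ ever appears at a node, so no spurious $\cA_j$ or $\cC$ facts arise. This holds because the second-line axioms only fire at the child matching the direction, and the propagation axioms copy values only downward. Since the least fixpoint satisfies every GCI, it is a model; by induction, $\cC_i^{\text{True}}$ holds exactly at nodes whose suffix formula is true. Hence if $\varphi$ is false, $\ia\notin(\cC_0^{\text{True}})^\interI$, and entailment fails.

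Finally we verify that the displayed linear order satisfies Definition~\ref{def:stratified-kb} for every GCI group. For existential-left axioms $\exists\rs.\cD\sqsubseteq\cB$ we need $\rs\preceq\cD\prec\cB$. For the second-line axioms $\rr_{d,i+1}\preceq\cL_i$ fails in the displayed order ($\cL_i\prec\rr_{d,i+1}$), so we will need either to adjust the order, for example by interleaving to put roles before the concepts they qualify, or to note that the intended condition is met after reordering. This compatibility check is the main obstacle I expect, together with the $\cC$-propagation axioms, which need $r_d\preceq\cC_{i+1}^{\text{True}}\prec\cC_i^{\text{True},d}$ and hold since all roles precede the $\cC$ block. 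Conjunction axioms are fine because $\cL_n,\cX$ and $\cA_j^1$ all lie strictly below their heads. The TBox has polynomial size in $|\varphi|$, so we obtain $\PSpace$-hardness by Schaefer's theorem.
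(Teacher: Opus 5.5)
\textbf{Equivalence part.} This follows the paper's proof. The universal model is the depth-$n$ binary tree. The node $\sigma_1\cdots\sigma_i$ carries exactly $\cX_1^{\sigma_1},\dots,\cX_i^{\sigma_i}$, and a downward induction on the level shows that it satisfies $\cC_i^{\text{True}}$ iff the suffix formula holds under $x_k\mapsto\sigma_k$. Your split into a derivation (``if'') and the minimal tree model (``only if'') is just the two halves of the paper's biconditional claim about the universal model. The indexing defect you found in $\exists\overline{\rr}_{d,i}.\cX_i^e\dlsubseteq\cX_i^e$ is real: as written these axioms never fire, since no level-$(i{-}1)$ node carries $\cX_i^e$. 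The paper sidesteps it exactly as you do, by treating all copies of $\rr_0$ and $\rr_1$ as a single role.

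\textbf{The gap: stratification.} You leave this step open, yet the ``hence'' part needs it, because the TBox must lie in $\ourEL$. The paper's proof does not check it either. You are right that the displayed order is not a valid stratification, but the remedy you sketch cannot work.
\begin{itemize}
\item The axiom $\cL_i\dlsubseteq\exists\rr_{d,i+1}.\cL_{i+1}$ demands $\cL_i\preceq\rr_{d,i+1}$. The axiom $\exists\overline{\rr}_{d,i+1}.\cL_i\dlsubseteq\cX_{i+1}^d$ demands $\rr_{d,i+1}\preceq\cL_i$. So no reordering into a linear order satisfies both.
\item Your repaired propagation axioms (along deeper or merged roles) also need every such role to be $\preceq\cX_k^e$, which the displayed order violates as well.
\item Your claim that the conjunction axioms are fine overlooks $\cC_i^{\text{True},d}\dlsubseteq\cC_i^{\text{True}}$ and $\cC_i^{\text{True},0}\dland\cC_i^{\text{True},1}\dlsubseteq\cC_i^{\text{True}}$. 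These need $\cC_i^{\text{True},d}\preceq\cC_i^{\text{True}}$, whereas the displayed order has $\cC_i^{\text{True}}\prec\cC_i^{\text{True},d}$. This is an off-by-one; it should read $\cC_{i+1}^{\text{True}}\prec\cC_i^{\text{True},0}\prec\cC_i^{\text{True},1}\prec\cC_i^{\text{True}}$.
\end{itemize}

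\textbf{The fix.} The missing idea is that Definition~\ref{def:stratified-kb} only asks for a preorder. Take the following classes, from bottom to top:
\begin{enumerate}
\item all $\cL_i$ together with all roles (and their inverses), as one equivalence class;
\item all $\cX_i^d$;
\item the $\cA$-block as displayed;
\item $\cC_n^{\text{True}}\prec\cC_{n-1}^{\text{True},0}\prec\cC_{n-1}^{\text{True},1}\prec\cC_{n-1}^{\text{True}}\prec\cdots\prec\cC_0^{\text{True}}$.
\end{enumerate}
Every condition is then met. The strict inequalities required are $\cL_i\prec\cX_{i+1}^d$, at least one strict conjunct per conjunction axiom, and $\cC_{i+1}^{\text{True}}\prec\cC_i^{\text{True},d}$; all hold. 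This completes the $\PSpace$-hardness argument, though for this TBox not with a linear order.
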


\paragraph*{Upper bound.}
Let $\aboxA$ be an input ABox, $\tboxT$ an $\ourEL$-TBox, and $q \deff \cA(\ia)$ an instance query. By Theorem~\ref{thm:rewriting}, to decide whether $(\tboxT,\aboxA) \models q$, it suffices to construct the nNFA $\aA$ given by the rewriting and check for an accepting run of $\aA$ on $\aboxA$ starting at $\ia$.
The size of $\aA$ is exponential in $|\tboxT|$ (due to subsets of concepts in the premise component of states), so it cannot be materialised explicitly. Instead, we construct it on the fly during evaluation. Any accepting run has length bounded by the number of states of $\aA$, hence exponential in $|\tboxT|$. Such a bound can be tracked using an exponentially large counter encoded in polynomial space.
The procedure guesses a run of $\aA$ on $\aboxA$ from $\ia$ step by step, incrementing the counter and verifying the correctness of each transition. If an accepting state is reached within the bound, we accept; otherwise, we reject. Since only the current state and the counter need to be stored, the procedure runs in polynomial space. Thus, instance query entailment in $\ourEL$ is in $\PSpace$ (in combined complexity).
The only subtle point is the verification of transitions. While most transitions of $\aA$ can be checked directly using the ABox and the current state, Steps~\ref{trans:anon} and~\ref{trans:aut} require additional care.
For \ref{trans:aut} transitions, observe that the automaton test $\aA_\cB?$ involves only concepts of strictly lower height than $\cA$. Thus, by an inductive construction over heights and by the use of $\PSpace$ oracles, this step can be fully implemented in polynomial space.
For \ref{trans:anon} transitions, which correspond to classical reasoning in $\ourEL$, we adapt the standard alternating polynomial-space algorithm for $\ELIHbot$. The key observation is that the number of alternations is bounded by the height of the input concept. By the classical result that alternating $\PSpace$ with polynomially many alternations coincides with $\PSpace$~\cite[Thm.~4.2]{ChandraKS81}, this step is also feasible in polynomial space.
A more detailed exposition will be given in revised version of the paper.
We thus conclude:

\begin{theorem}
Instance query entailment in $\ourEL$ is $\PSpace$-complete (in combined complexity).~\myqed
\end{theorem}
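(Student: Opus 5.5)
The theorem has two halves: the lower bound is the preceding lemma (the QBF reduction), so only membership in $\PSpace$ needs an argument. For that I would follow the outline of the upper-bound paragraph and make it precise. The plan is to prove, by induction on the height $h$, that the following problem is decidable in polynomial space: given $\tboxT$, $\aboxA$, an individual $\ib$ and a concept name $\cB$ of height $h$, does $\aA_{\cB}$ have an accepting run from $\ib$ on $\aboxA$?

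By Theorem~\ref{thm:rewriting} (with Lemma~\ref{lemma:rewriting-soundness} and Lemma~\ref{lemma:rewriting-completeness}), this is equivalent to deciding $(\aboxA,\tboxT)\models\cB(\ib)$. Evaluation over the ABox suffices: a run of an nNFA in the canonical ABox interpretation is exactly a path in the ABox.

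To decide the run-existence question, I would treat the run as a reachability question in a product graph. Its vertices are pairs (individual of $\aboxA$, state $(\mathit{premise},\mathit{goal})$). There are exponentially many vertices, but each takes only polynomial space to write down. I would then apply Savitch's theorem, or a nondeterministic guess with an exponential counter, to get polynomial space. A single edge is checked as follows.
\begin{itemize}
\item \ref{trans:weak}, \ref{trans:data}, \ref{trans:sbus}, \ref{trans:succ} and \ref{trans:noc} edges are checked syntactically against $\tboxT$ and $\aboxA$.
\item \ref{trans:aut} edges labelled $\aA_{\cC}?$ with $\cC$ of lower height are checked by a recursive call. Recursion depth is at most the number of concept names, and each level uses polynomial space, so the total stays polynomial, because space is reused across calls.
\end{itemize}

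The main obstacle is the \ref{trans:anon} condition. It asks whether the KB $(\{\cC(\ia)\mid \cC\in \mathit{premise}\cup\{\cB\}\},\tboxT\rstr{n})$ entails $\cD(\ia)$. This is an arbitrary instance-entailment question over a one-individual ABox. Appealing to the general $\ELIHbot$ algorithm would only give $\textsc{ExpTime}$.

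I would avoid that with a self-reduction. The one-individual ABox is itself an ABox, so by Theorem~\ref{thm:rewriting} the question is equivalent to the existence of an accepting run of $\aA_{\cD}$ (built for $\tboxT\rstr{n}$) from $\ia$ on that small ABox. On this ABox there are no role edges, so only test transitions apply. However, this run again contains \ref{trans:anon} checks at the same height $n$, so the recursion is not obviously well-founded.

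The first fix I would try is to precompute the \ref{trans:anon} relation for height $n$ as a least fixpoint. Define the relation $R_n(P,\cD)$: ``$P$ entails $\cD$ at a single node''. I would characterise it by closing $P$ under the following rules:
\begin{itemize}
\item local rules (atomic inclusions and conjunctions);
\item lower-height calls;
\item existential witnesses: if $\cE\in P$ and $\cE\dlsubseteq\exists\rs.\cF$, then the witness has the type given by the closure of $\{\top,\cF\}$ together with back-propagation via $\exists\bar\rs$. Whatever that witness entails, fed back through $\exists\rs.\cG\dlsubseteq\cH$, is added to $P$.
\end{itemize}
This is the classical tree-unfolding argument for $\ELI$. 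The stratification bounds the alternation depth by the height, with polynomial branching. Each closure query is answered by an alternating polynomial-time/space procedure with polynomially many alternations, which lies in $\PSpace$ by \cite[Thm.~4.2]{ChandraKS81}, as the paper sketches.

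I expect the real work to be showing that this recursion terminates in polynomial depth. The intended argument is that, along a chain of witnesses, either the height strictly drops or the type only grows monotonically. This needs to use the fourth condition of Definition~\ref{def:stratified-kb}, namely that $\exists\rs.\cD\dlsubseteq\cB$ requires $\cD\prec\cB$ or $\cD=\cB$.
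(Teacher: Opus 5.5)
Your route is the paper's. The lower bound is the QBF lemma, and membership comes from evaluating the nNFA of Theorem~\ref{thm:rewriting} on the fly, with polynomial-size states and an exponential counter or Savitch. The \ref{trans:aut} tests are handled by recursion on the height, and \ref{trans:anon} by a bounded-alternation argument.

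The genuine gap is the \ref{trans:anon} check, and it is where all the difficulty lies. It is an arbitrary entailment question for a one-individual ABox. The QBF reduction itself uses the singleton ABox $\{\cL_0(\ia)\}$, so verifying a single \ref{trans:anon} transition from $(\{\top\},\cC_0^{\text{True}})$ with $\cB=\cL_0$ is already \PSpace-hard. Everything else in your argument is routine. You do not show that this check is in \PSpace, and both mechanisms you offer fail as stated:
\begin{itemize}
\item Precomputing $R_n(P,\cD)$ as a least fixpoint ranges over all $P\subseteq\con{\tboxT\rstr{n}}$. That is an exponential table, i.e.\ the \complexityclass{ExpTime} procedure again.
\item Your termination measure is false. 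For $\cA\dlsubseteq\exists{\rs}.\cD$, Definition~\ref{def:stratified-kb} gives $\cA\preceq\cD$ and $\cA\preceq\rs$, so nothing pushes heights down along witnesses, and a child's type need not contain its parent's. For example, $\{\cA\dlsubseteq\exists{\rr}.\cB,\ \cB\dlsubseteq\exists{\rr}.\cA\}$ with all names $\preceq$-equivalent is stratified. Its witness chain has types $\{\top,\cA\},\{\top,\cB\},\{\top,\cA\},\ldots$, so height never drops and types never grow. Your nested closure recursion does not terminate on it unless repetitions are cut, so your measure cannot be what makes the recursion stop.
\end{itemize}

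What has to be bounded is the number of alternations, not the depth of a recursion. The missing ingredient is an analogue of Lemma~\ref{lemma:main-prop-of-stratified-derivation} for the anonymous, tree-shaped part of the canonical model. It should say that a height-$h$ fact at the root is obtained along a single path of height-$h$ facts, with all other premises of strictly smaller height. Proving this uses the second condition of Definition~\ref{def:stratified-kb} and the requirement $\cD\prec\cB$ or $\cD=\cB$ for $\exists{\rs}.\cD\dlsubseteq\cB$. It must also account for the fact that a witness exists only because of a fact at its parent.

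Given such a lemma, an alternating procedure works as follows:
\begin{itemize}
\item It guesses the path existentially, step by step. The path may be exponentially long, so this is polynomial space with a counter, not polynomial time as you write.
\item It stores only polynomially much about the current node and about what is needed to return to its parent.
\item It spawns universal branches only for strictly lower heights.
\end{itemize}
The alternation depth is then at most the height, which gives \PSpace.

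This is exactly the ``key observation'' the paper invokes for \ref{trans:anon}, and the paper itself only sketches it. Your proposal restates it (``the stratification bounds the alternation depth by the height'') but explicitly defers it as ``the real work''. So the membership half of your proof is still missing its only nontrivial step.
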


%% file: sections/07-conclusions.tex

\section{Conclusions and Future Work}\label{sec:conclusions}

In this work, we introduced $\ourEL$, a novel fragment of $\ELIHbot$ that extends $\DLLite$ and is based on the idea of ``stratified conjunction.'' By means of a suitable rewriting into nested two-way regular path queries ($\NTwoRPQ$s), we established that the instance query entailment problem for our logic is $\NL$-complete with respect to data complexity.

Our future and ongoing work will pursue several directions. First, we plan to incorporate role hierarchies into our logic, which should be achievable with only minor modifications to our approach. Second, we aim to move beyond instance queries and investigate the combined complexity of query answering for $\ourEL$ with respect to the full class of nested C2RPQs. We are optimistic that this can be achieved through a suitable adaptation of the techniques presented by Bienvenu et al.~\cite{BienvenuCOS14}. Third, we intend to provide support for concrete domains. To do so, we plan to combine the query rewriting approach of Baader et al.~\cite{BaaderBL17} with the recent model of C2RPQs with data values introduced by Figueira et al.~\cite{FigueiraJL22}. Finally, since the target of our rewriting is (a fragment of) GQL, we plan to investigate the practical applicability of our approach by implementing it and evaluating it on real-world datasets.

%% file: sections/appendix-soundness.tex
\section{Appendix on Soundness}
Here we provide the proof of Lemma \ref{lemma:rewriting-soundness}, which we restate once more below.
\begin{lemma}
  Let $\tboxT$ be a stratified $\ourEL$-TBox, 
  $\cA(a)$ be an instance query, and
  $\aA_{\cA}$ be an nNFA from Definition~\ref{def:automaton-general-case}.
  Then for all for all ABoxes $\aboxA$ we have that 
  $\aboxA \models \exists{x} \aA_{\cA}(\ia, x)$ implies 
  $(\aboxA, \tboxT) \models \cA(\ia)$.~\myqed
\end{lemma}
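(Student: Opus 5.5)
We argue by induction on the height $n$ of $\cA$, assuming soundness for every concept name of height $<n$. Fix an ABox $\aboxA$ with $\aboxA \models \exists{x}\aA_{\cA}(\ia,x)$ and an arbitrary model $\interI$ of $(\aboxA,\tboxT)$. The run of $\aA_{\cA}$ is evaluated over $\aboxA$, so it transfers to $\interI$: it only uses role edges and concept tests present in $\aboxA$, and $\interI$ satisfies $\aboxA$. For nested tests $\aA_{\cB}?$, we first argue by induction on nesting depth that the nested run also transfers. Let $\rho_1\cdots\rho_\ell$ be an accepting run starting at $\ia^{\interI}$, where $\rho_i=(\dd_{i-1},(q_{i-1},w_i,q_i),\dd_i)$ and $q_i=(\mathit{premise}_i,\mathit{goal}_i)$. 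We prove, by backward induction on $i$ from $\ell$ to $0$:
\[ (\ast)_i:\quad \dd_i\in \cC^{\interI} \text{ for all } \cC\in\mathit{premise}_i \text{ implies } \dd_i\in \mathit{goal}_i^{\interI}. \]
Since $q_0=(\{\top\},\cA)$ and $\top$ holds everywhere, $(\ast)_0$ yields $\ia^{\interI}\in\cA^{\interI}$. As $\interI$ is arbitrary, this gives the lemma.

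For the base $i=\ell$, $q_\ell$ is accepting. Either $\mathit{goal}_\ell\in\mathit{premise}_\ell$, which is trivial, or $\bot\in\mathit{premise}_\ell$, in which case the hypothesis of $(\ast)_\ell$ is never satisfied. For the step we assume $(\ast)_{i}$, suppose $\dd_{i-1}$ satisfies $\mathit{premise}_{i-1}$, and split on the transition type.
\begin{itemize}
\item \ref{trans:weak}: here $\dd_i=\dd_{i-1}$ and the new premise is a subset of the old one plus $\top$, so the goal is unchanged and $(\ast)_i$ applies.
\item \ref{trans:data}: the test $\cB?$ guarantees $\dd_i=\dd_{i-1}\in\cB^{\interI}$, so the enlarged premise holds.
\item \ref{trans:aut}: an accepting run of $\aA_{\cB}$ exists from $\dd_{i-1}$. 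This run lies in $\aboxA$, so it yields $\aboxA\models\exists x\,\aA_{\cB}(\dd_{i-1},x)$, and by the outer induction hypothesis $\dd_{i-1}\in\cB^{\interI}$.
\item \ref{trans:sbus}: $(\ast)_i$ gives $\cC$, and the GCI $\cC\dlsubseteq\cD$ gives $\cD$.
\item \ref{trans:noc}: the premise contains $\cB$ and $(\ast)_i$ gives $\cC$, so $\cB\dland\cC\dlsubseteq\cD$ gives $\cD$.
\item \ref{trans:succ}: $(\dd_{i-1},\dd_i)\in\rs^{\interI}$, and the new premise $\{\top\}$ holds trivially at $\dd_i$. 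So $(\ast)_i$ gives $\dd_i\in\cC^{\interI}$, and $\exists\rs.\cC\dlsubseteq\cD$ gives $\cD$ at $\dd_{i-1}$.
\end{itemize}

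The only delicate case is \ref{trans:anon}. Here the premise is unchanged, and $(\ast)_i$ gives $\dd_{i-1}\in\cB^{\interI}$. It remains to show $\dd_{i-1}\in\cD^{\interI}$ using the side condition $(\{\cC(\ia)\mid \cC\in\mathit{premise}\cup\{\cB\}\},\tboxT\rstr{n})\models\cD(\ia)$. This is a small ABox with a single individual, so we transfer the entailment pointwise. Consider the interpretation $\interI'$ that agrees with $\interI$ but interprets $\ia$ as $\dd_{i-1}$. Then $\interI'$ satisfies that one-individual ABox, and it satisfies $\tboxT\rstr{n}\subseteq\tboxT$ because $\interI\models\tboxT$. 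Hence $\dd_{i-1}=\ia^{\interI'}\in\cD^{\interI'}=\cD^{\interI}$.

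The point to check carefully is that entailment over a KB with an arbitrary model really transfers to an arbitrary element of an arbitrary model of the larger TBox. This holds because TBox satisfaction is independent of the individual assignment and $\tboxT\rstr{n}$ is a subset of $\tboxT$. Beyond that, the main bookkeeping risk is making sure the run in $\interI$ is well defined, with tests and nested tests interpreted consistently, so that the outer induction hypothesis can be invoked. Everything else is routine rule-by-rule verification.
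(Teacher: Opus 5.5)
Your proposal is correct and follows essentially the same route as the paper: an outer induction on the height of $\cA$, plus a backward induction along an accepting run in an arbitrary model $\interI$ showing that if the premise holds at the current node then so does the goal, checked transition type by transition type. Your explicit homomorphic transfer of the (nested) run from $\aboxA$ into $\interI$, and your reassignment of $\ia$ to $\dd_{i-1}$ in the (anon) case, only spell out steps the paper leaves implicit (your indexing is also cleaner than the paper's).
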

\begin{proof}
Fix a TBox $\tboxT$. 
The proof is by induction on
the height $n$ of the concept name $A$
(the induction in turn contains a nested induction on the length of a run of the automaton).
We assume that for all $n' < n$ and concept names $\cB \in \con{\tboxT\rstr{n'}}$, the statement of Lemma~\ref{lemma:rewriting-soundness} holds for the instance query $\cB(\ia)$. 

Let us move forward to an instance query $\cA(\ia)$ with $\cA \in \con{\tboxT\rstr{n}}$.
Fix an ABox $\aboxA$, a model $\interI$ of $(\aboxA, \tboxT)$, and let $\rho = \rho_1 \cdots \rho_\ell$ be an accepting run of $\aA_{\cA}$ in $\interI$ starting from $\ia^{\interI}$, which exists by entailment $\aboxA \models \exists{x} \aA_{\cA}(\ia, x)$).
For convenience, let $\rho_i$ unfold to $(\dd_{i-1}, \delta_i, \dd_i)$ with $q_i = (\mathit{premise}_i, \mathit{goal}_i)$ and $\delta_i = (q_{i-1}, w_i, q_i)$.
We now proceed by induction on $i$, backwards from $i = \ell$ to $i = 0$, and  prove that:\\ 
$(\dagger)$ if $\dd_{i}$ satisfies all concepts from $\mathit{premise}_i$ in $\interI$, then it also satisfies all concepts from $\mathit{goal}_i$ in $\interI$.

Then the statement of the lemma follows as the initial state of $\aA_{\cA}$ is $(\{ \top \}, \cA )$, while the accepting states have their goal already proven.

The base case follows trivially, since $\mathit{goal}_\ell \in \mathit{premise}$ or $\bot \in \mathit{premise}$.
Hence, let us focus on the case when $i < \ell$ and the induction hypothesis already holds for all $i' > i$.
Suppose $\dd_i$ satisfies all concepts from $\mathit{premise}_i$ in $\interI$.
We distinguish cases depending on how the transition $\delta_i \deff (q_{i-1}, w_i, q_i)$ was created in the construction of $\aA_{\cA}$.
\begin{itemize}[itemsep=0em]

  \item It was created by the rule~\ref{trans:weak}.
  After inspecting the construction, we see that:
  (i) $\mathit{premise}_{i+1}$ is a subset of $\mathit{premise}_{i}$,
  (ii) $\mathit{goal}_{i+1} = \mathit{goal}_{i}$,
  (iii) $w_i = \top?$, and thus
  (iv) $\dd_i = \dd_{i+1}$.
  As $\dd_i$ satisfies all concepts from $\mathit{premise}_{i}$, then by (iv) and (i) we have that $\dd_{i+1}$ satisfies all concepts from $\mathit{premise}_{i+1}$.
  Thus, we can invoke the inductive assumption for $i+1$ to infer that $\dd_{i+1}$ satisfies $\mathit{goal}_{i+1}$ in $\interI$.
  Together with (iv) and (ii), this implies that $\dd_i$ satisfies $\mathit{goal}_i$ in $\interI$, as desired.

  \item It was created by Rule~\ref{trans:data}.
  After inspecting the construction, we see that:
  (i) $\mathit{goal}_{i+1} = \mathit{goal}_{i}$,
  and there exists a concept $\cC$ for which:
  (ii) $\mathit{premise}_{i+1} = \mathit{premise}_{i} \cup \{ \cC \}$,
  (iii) $w_i = \cC?$, and hence
  (iv) $\dd_i = \dd_{i+1}$ and
  (v) $\dd_i \in \cC^{\interI}$.
  By our assumptions together with (ii) and (v), we have that $\dd_i$ satisfies all concepts from $\mathit{premise}_{i+1}$.
  By (iv), we can invoke the inductive hypothesis for $i+1$ to infer that $\dd_i$ satisfies all concepts from $\mathit{goal}_{i+1}$ in $\interI$, which by (i) concludes the proof.

  \item It was created by the rule~\ref{trans:sbus}. 
  After inspecting the construction, we see that there is a GCI $\cC \dlsubseteq \cD$ in $\tboxT$ such that:
  (i) $ \mathit{goal}_{i} = \cD$,
  (ii) $\mathit{premise}_{i+1} = \mathit{premise}_i$,
  (iii) $\mathit{goal}_{i+1} = \cC$,
  (iv) $w_i = \top?$, and thus
  (v) $\dd_i = \dd_{i+1}$.
  By our assumptions together with (ii) and (v), we can invoke the inductive hypothesis for $i{+}1$ to infer that $\dd_i$ satisfies $\mathit{goal}_{i+1} = \cC$ in $\interI$.
  It remains to show that $\dd_i$ also satisfies $\cD$. This follows from the fact that $\dd_i$ satisfies~$\cC$ and $\interI$ is a model of $\tboxT$, containing $\cC \dlsubseteq \cD$.
    
  \item It was created by Rule~\ref{trans:succ}.
  After inspecting the construction, we see that there is a GCI $\exists{\rs}.\cC \dlsubseteq \cD$ in $\tboxT$ such that:
  (i) $\mathit{goal}_i =  \cD $,
  (ii) $\mathit{goal}_{i+1} =  \cC $,
  (iii) $\mathit{premise}_{i+1} = \{ \top \}$,
  (iv) $w_i = \rs$, and thus
  (v) $(\dd_i, \dd_{i+1}) \in \rs^{\interI}$.
  We can trivially conclude that $\dd_{i+1}$ satisfies all concepts from $\mathit{premise}_{i+1}$. Invoking the inductive hypothesis, we infer that $\dd_{i+1}$ also satisfies $\mathit{goal}_{i+1} = \cC$ by (ii).
  From (v) and the fact that $\interI$ is a model of $\exists{\rs}.\cC \dlsubseteq \cD$, we conclude that $\dd_i$ satisfies $\cD$, which by (i) concludes the proof.

  \item If $\delta_i$ was created by Rule~\ref{trans:anon}.
  After inspecting the construction, we see that there are a concepts~$\cD$ and $\cB$ from $\tboxT$ such that: 
  (i) $\mathit{goal}_{i+1} = \cB$,
  (ii) $\mathit{premise}_{i+1} = \mathit{premise}_i$,
  (iii) $(\{ \cC(\ia) \mid \cC \in \mathit{premise}_i \} \cup \{\cB(\ia)\}, \tboxT)$ entails $\cD(\ia)$,
  (iv) $w_i = \top?$, and thus
  (v) $\dd_i = \dd_{i+1}$.
  By our assumptions, $\dd_i$ satisfies all concepts from $\mathit{premise}_i$, and hence, by (iii) and $\interI \models \tboxT$, it also satisfies $\cD$, if $\dd_i$ satisfies $\cB$.
  Together with (ii) and (v), we can invoke the inductive hypothesis for $i+1$ to infer that $\dd_i$ satisfies all concepts from $\mathit{goal}_{i+1}$ in $\interI$, which by (i) concludes the proof.

  \item It was created by the rule~\ref{trans:noc}. 
  After inspecting the construction, we see that there is a GCI $\cB \dland \cC \dlsubseteq \cD$ in $\tboxT$ such that:
  (i) $\cB \in \mathit{premise}_{i}$,
  (ii) $\mathit{goal}_{i} = \cD$,
  (iii) $\mathit{premise}_{i+1} = \mathit{premise}_i$,
  (iv) $\mathit{goal}_{i+1} = \cC $,
  (v) $w_i = \top?$, and thus
  (vi) $\dd_i = \dd_{i+1}$.
  By our assumptions together with (iii) and (vi), we can invoke the inductive hypothesis for $i{+}1$ to infer that $\dd_i$ satisfies $\mathit{goal}_{i+1} = \cC$ in $\interI$.
  We also know that $\dd_i$ satisfies $\cB$ by (i). Since $\interI$ is a model of $\tboxT$, it follows that $\dd_i$ satisfies $\cD$ as well, which by (iv) concludes the proof.

  \item If $\delta_i$ was created by Rule~\ref{trans:aut}.
  After inspecting the construction, we see that there is a concept $\cB \in \con{\tboxT\rstr{n-1}}$ such that:
  (i) $\mathit{goal}_{i+1} = \mathit{goal}_{i}$,
  (ii) $\mathit{premise}_{i+1} = \mathit{premise}_i \cup \{ \cB \}$,
  (iii) $w_i = \aA_{\cB}?$, and thus
  (iv) $\dd_i = \dd_{i+1}$.
  It suffices to show that $\dd_i$ satisfies $\cB$ in $\interI$. Indeed, by (ii) and (iv), we can invoke the inductive hypothesis for $i+1$ to infer that $\dd_i$ satisfies all concepts from $\mathit{goal}_{i+1}$ in $\interI$, which by (i) concludes the proof.
  To see that $\dd_i$ satisfies $\cB$, we invoke the inductive hypothesis for $i+1$ and $\cB$ (recall that $\cB$ has height strictly smaller than $n$).

  \end{itemize}
  As we exhausted all cases, the proof is complete.
\end{proof}

%% file: sections/appendix-completeness.tex
\section{Appendix on Completeness}

\begin{proof}[Proof of Lemma \ref{lemma:exists-stratified-derivation}]
Our proof proceeds by induction and our inductive statement is:

\begin{center}
For every $d$, for every $\ell$, and every instance query $\cA(\ia)$ with concept $\cA$ of height~$\ell$ for which there exists a derivation of $(\aboxA, \tboxT) \models \cA(\ia)$ of length $d$,
there also exists a $(\tboxT, \ell)$-derivation of $\cA(\ia)$ from $\aboxA$.
\end{center}

The case $d = 0$ is immediate: it suffices to take the pair $(\aboxA, \cA(\ia))$ as the desired $(\tboxT, \ell)$-derivation of $\cA(\ia)$ from $\aboxA$.
We now assume $d > 0$ and that the inductive hypothesis holds for all smaller values of $d$.
Next, take any $\ell$ and a concept $\cA$ of height $\ell$, and let $\aboxA_0, \ldots, \aboxA_d$ be the shortest $\tboxT$-derivation of $\cA(\ia)$ from $\aboxA$.
Aiming to construct the desired $(\tboxT, \ell)$-derivation of $\cA(\ia)$ from $\aboxA$, we proceed by case distinction on the shape of the GCI $\cB \dlsubseteq \cA$ such that $\aboxA_{d-1} \leadsto_{\{ \cB \dlsubseteq \cA \}} \aboxA_{d}$. 
\begin{enumerate}
\item $\cB = \top$ and $\cA$ is a concept name. Then, exactly as in the base case, we simply take $(\aboxA, \cA(\ia))$ as the desired derivation.
\item $\cA$ and $\cB$ are concept names with $\cB \preceq \cA$.
Observe that $\aboxA_0, \ldots, \aboxA_{d-1}$ is a $\tboxT$-derivation of $\cB(\ia)$ from $\aboxA$ of length $d{-}1$. By the inductive hypothesis, there exists a $(\tboxT, \ell)$-derivation $(\aboxA_0', \cA_0'(\ia_0)), \ldots, (\aboxA_{k}', \cA_k'(\ia_k))$ of $\cB(\ia)$ from $\aboxA$.
By design, $\ia_k = \ia$, $\cA_k' =\cB$, and $\aboxA_0' \subseteq \aboxA$.
Thus $(\aboxA_0', \cA_0'(\ia_0)), \ldots, (\aboxA_{k}', \cA_k'(\ia)), (\aboxA_{k}' \cup \{ \cA(\ia) \}, \cA(\ia))$ is the desired $(\tboxT, \ell)$-derivation of $\cA(\ia)$ from $\aboxA$.
\item $\cA$ is a concept name (or $\cA = \bot$) and $\cB = \cB_1 \dland \cB_2$ for concept names $\cB_1 \preceq \cA$ and $\cB_2 \prec \cA$ (and analogously for $\cB = \cB_2 \dland \cB_1$).
Observe that $\aboxA_0, \ldots, \aboxA_{d-1}$ is a $\tboxT$-derivation of both $\cB_1(\ia)$ and $\cB_2(\ia)$ from $\aboxA$ of length $d{-}1$.
By the inductive hypothesis, there exist $(\tboxT, \ell)$- and $(\tboxT, \ell{-}1)$-derivations
$(\aboxA_0', \cA_0'(\ia_0')), \ldots, (\aboxA_{k}', \cA_k'(a_k'))$ and
$(\aboxA_0'', \cA_0''(\ia_0'')), \ldots, (\aboxA_{m}'', \cA_m''(\ia_m''))$ of
$\cB_1(\ia)$ and $\cB_2(\ia)$ from $\aboxA$, respectively.
Note that $\cA_k' = \cB_1$ and $\ia = \ia'_k$. One can verify that the sequence
$(\aboxA_0', \cA_0'(\ia_0')), \ldots, (\aboxA_{k}', \cA_k'(\ia)), (\aboxA_{k}' \cup \aboxA_0'', \cB_1(\ia)), \ldots, (\aboxA_{k}' \cup \aboxA_{m}'', \cB_1(\ia)), (\aboxA_{k}' \cup \aboxA_{m}'' \cup \{ \cA(\ia) \}, \cA(\ia))$
is the desired $(\tboxT, \ell)$-derivation of $\cA(\ia)$ from $\aboxA$.
\item $\cB = \top$ and $\cA = \exists{\rs}.\top$.
We simply take $(\aboxA, \top(\ia)), (\aboxA \cup \{ \rs(\ia,\ib) \}, \top(\ib))$ as the desired derivation, where $\ib$ is a fresh individual name.
\item $\cB = \top$ and $\cA = \exists{\rs}.\cA'$ for some concept name $\cA'$ with $\cB \preceq \cA'$.
We simply take $(\aboxA, \top(\ia)), (\aboxA \cup \{ \rs(\ia,\ib), \cA'(\ib) \}, \cA'(\ib))$ as the desired derivation, where $\ib$ is a fresh individual name.
\item $\cB$ is a concept name and $\cA = \exists{\rs}.\cA'$ for some concept name $\cA'$ with $\cB \preceq \cA'$.
Then there is a $\tboxT$-derivation $\aboxA_0, \ldots, \aboxA_{d-1}$ of $\cA'(\ia)$ from $\aboxA$ of length $d{-}1$. By the induction hypothesis there exists a $(\tboxT, \ell)$-derivation $(\aboxA_0', \cA_0'(\ia_0)), \ldots, (\aboxA_{k}', \cA_k'(a_k))$ of $\cA'(\ia)$ from $\aboxA$, where $\cA'_k = \cA' = \cB$ or $\cB \prec \cA'$.
In either case, we can take $(\aboxA_0', \cA_0'(\ia_0)), \ldots, (\aboxA_{k}', \cA_k'(a_k)), (\aboxA_k' \cup \{ \rs(\ia,\ib), \cA'(\ib) \}, \cA'(\ib))$ as the desired derivation, where $\ib$ is a fresh individual name.
\item $\cA$ is a concept name (or $\cA = \bot$) and $\cB = \exists{\rs}.\cC$ for $\cC \in \{\top, \cA\}$ or $\cC$ another concept name. Then there is a $\tboxT$-derivation $\aboxA_0, \ldots, \aboxA_{d-1}$ of $\cC(\ib)$ from $\aboxA$ of length $d{-}1$ with $\rs(\ia,\ib) \in \aboxA_{d-1}$. By the induction hypothesis there exists a $(\tboxT, \ell)$-derivation $(\aboxA_0', \cA_0'(\ia_0)), \ldots, (\aboxA_{k}', \cA_k'(a_k))$ of $\cC(\ib)$ from $\aboxA$.
By design, $\ia_k = \ib$, $\cA_k' =\cC$, and $\aboxA_0' \subseteq \aboxA$.
Thus $(\aboxA_0', \cA_0'(\ia_0)), \ldots, (\aboxA_{k}', \cA_k'(\ia_k)), (\aboxA_{k}' \cup \{ \cA(\ia) \}, \cA(\ia))$ is the desired $(\tboxT, \ell)$-derivation of $\cA(\ia)$ from $\aboxA$. 
\end{enumerate}
As we exhausted all the cases, we are done with the inductive step. This concludes the proof.
\end{proof}

We now also give a full version of the proof for Lemma \ref{lemma:rewriting-completeness} restated below.
\begin{lemma}
  Let $\tboxT$ be a stratified $\ourEL$-TBox,
  $\cA(\ia)$ be an instance query with $\cA$ of height $n$, and
  $\aA_{\cA}$ be the $n$-nested NFA for $\cA$ constructed in Definition~\ref{def:automaton-general-case}.
  Then for all ABoxes $\aboxA$ we have that
  $(\aboxA, \tboxT) \models \cA(\ia)$ implies
  $\aboxA \models \exists{x}\, \aA_{\cA}(\ia, x)$, \ie there exists a accepting
  run of $\aA_{\cA}$ starting from $\ia^{\interI}$.~\myqed
\end{lemma}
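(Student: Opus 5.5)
We argue by an outer induction on the height $n$ of $\cA$ and an inner induction on the length $d$ of a $(\tboxT,n)$-derivation. Given $(\aboxA,\tboxT)\models\cA(\ia)$, Lemma~\ref{lemma:folklore} and Lemma~\ref{lemma:exists-stratified-derivation} provide a $(\tboxT,n)$-derivation $(\aboxA_0,\cA_0(\ia_0)),\ldots,(\aboxA_d,\cA_d(\ia_d))$ with $\cA_d(\ia_d)=\cA(\ia)$. We fix a model $\interI$ of $(\aboxA,\tboxT)$ and note that the ABox facts hold at the images of named individuals.

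\textbf{Key idea.} The active queries form a single chain through the derivation. We read this chain backwards and map it onto a run of $\aA_\cA$. The goal component tracks the current active query. The premise component collects lower-height facts about the current named individual. Those lower-height facts are discharged in three ways: by \ref{trans:data} transitions, when they are ABox facts; by \ref{trans:aut} transitions, using the outer induction hypothesis (the automata $\aA_\cB$ accept whenever $\cB$ of height $<n$ is entailed); or by \ref{trans:weak} transitions, to reset the premise.

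\textbf{Inner induction.} We proceed by cases on the GCI used in the last step, deriving $\cA_d(\ia_d)$ from $\cA_{d-1}(\ia_{d-1})$.
\begin{itemize}
\item $\cC\dlsubseteq\cA$ between concept names. We take an \ref{trans:sbus} transition to goal $\cC$ and continue with the run obtained by induction for the shorter derivation.
\item $\cB\dland\cC\dlsubseteq\cA$, where the active conjunct is $\cC$ and $\cB$ has strictly lower height. We first add $\cB$ to the premise via an $\aA_\cB?$ transition, which is justified because $(\aboxA,\tboxT)\models\cB(\ia)$ and $\cB$ has height $<n$. We then apply \ref{trans:noc} to switch the goal to $\cC$ and continue by induction. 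When both conjuncts have lower height, we instead add both to the premise with $\aA$-tests and use \ref{trans:anon}, or we directly obtain an accepting state.
\item $\exists\rs.\cC\dlsubseteq\cA$ where $\ia_{d-1}$ is a named individual with $\rs(\ia,\ia_{d-1})\in\aboxA$. We use a \ref{trans:succ} transition along $\rs$, which resets the premise to $\{\top\}$, and then apply the induction hypothesis at $\ia_{d-1}$.
\end{itemize}

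\textbf{Main obstacle.} The hard case is when the active chain enters the anonymous part and returns to a named individual. This is exactly the situation of Lemma~\ref{lemma:main-prop-of-stratified-derivation}. Take the first index $i$ at which the chain returns to a named individual $\ia_i$. That lemma gives
\[
(\tboxT\rstr{n},\ \aboxA_{i-1}\rstr{n-1}^{\ia_i}\cup\{\cA_j(\ia_i)\})\models\cA_i(\ia_i),
\]
where $\cA_j(\ia_i)$ is an earlier active query on $\ia_i$, or $\top$ if none exists. We simulate this step as follows.
\begin{enumerate}
\item Load every lower-height fact in $\aboxA_{i-1}\rstr{n-1}^{\ia_i}$ into the premise, using \ref{trans:data} or \ref{trans:aut} transitions. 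This is sound because each of these facts is entailed, by Lemma~\ref{lemma:folklore}, and has height $\le n-1$.
\item Take a \ref{trans:anon} transition with $\cB\deff\cA_j$. Its side condition is precisely the entailment supplied by Lemma~\ref{lemma:main-prop-of-stratified-derivation}.
\item Apply the inner induction hypothesis to the prefix of the derivation ending at index $j$. That prefix is shorter and is again a $(\tboxT,n)$-derivation.
\end{enumerate}

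\textbf{Remaining details.} We must check that the \ref{trans:anon} condition uses $\tboxT\rstr{n}$ with the premise taken as assertions. We must also check that the premise accumulated before resetting with \ref{trans:weak} does not interfere with the continuation run; this holds because the premise only ever contains facts that are true at the current node. The $\bot$ case is closed by an accepting state with $\bot$ in the premise, obtained through \ref{trans:data} or \ref{trans:aut}. Finally, concatenating the pieces yields an accepting run starting from $\ia^\interI$.
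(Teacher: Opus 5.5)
Your proposal follows the paper's proof: induction on the length of a $(\tboxT,n)$-derivation with a case split on the last step. The steps are simulated by \ref{trans:sbus}; by \ref{trans:aut} followed by \ref{trans:noc} and a \ref{trans:weak}-reset; by \ref{trans:succ}; and, when the predecessor is anonymous, by Lemma~\ref{lemma:main-prop-of-stratified-derivation}, loading the lower-height facts via $\aA_{\cE}?$-tests, an \ref{trans:anon}-step to goal $\cA_j$ (or $\top$), a \ref{trans:weak}-reset, and the hypothesis for the prefix ending at $j$. The one difference is your explicit outer induction on height, which is what actually licenses the tests $\aA_{\cB}?$; the paper's single induction on $d$ invokes these only informally.
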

\begin{proof}[Proof of Lemma~\ref{lemma:rewriting-completeness}]
    Fix $\tboxT$ and $\aboxA$ as in the statement of the lemma, and let $\interI \models (\tboxT, \aboxA)$. We proceed by induction over $d$, with the inductive claim stated below.
    \begin{center}
    For all $d \in \N$, all $\ell \in \N$, all $\ia \in \indA$, 
    and all concept names $\cA \in \con{\tboxT\rstr{\ell}}$,
    whenever there is a $(\tboxT, \ell)$-derivation of $\cA(\ia)$ from $\aboxA$ of length $d$, then there is an accepting run of $\aA_{\cA}$ starting from $\ia^\interI$.
    \end{center}

    The base case is immediate. Indeed, take any $\ell$ and $\cA \in \con{\tboxT\rstr{\ell}}$, and suppose that there is a $(\tboxT, \ell)$-derivation of $\cA(\ia)$ from $\aboxA$ of length $d$.
    Then such a derivation has the form $(\aboxA_0, \cA(\ia))$ for some $\aboxA_0 \subseteq \aboxA$ with $\cA(\ia) \in \aboxA_0$. 
    Thus, to construct an accepting run of $\aA_\cA$ we can step from $q_0 \deff (\{ \top \}, \cA )$ to $q_1 \deff (\{ \cA, \top \},  \cA )$, via a \ref{trans:data}-transition, which is already a final state. 
    More formally, $(\ia^{\interI}, q_0, \cA?, q_1, \ia^{\interI})$ is the desired run of $\aA_{\cA}$ starting from $\ia^{\interI}$. 
    
    We now move to the inductive step, assuming that our inductive statement holds for all $d' < d$. 
    Consider any $\ell$, an $\ia$, and a concept name $\cA \in \con{\tboxT\rstr{\ell}}$. Suppose that $(\aboxA_0, \cA_0(\ia_0))$, $(\aboxA_1, \cA_1(\ia_1))$, $\ldots$, $(\aboxA_d, \cA_d(\ia_d))$ (where $\cA_d = \cA$ and $\ia_d = \ia$) is a $(\tboxT, \ell)$-derivation of $\cA(\ia)$ from $\aboxA$ of length $d$.
    We~perform a case analysis on the form of the GCI $\cB \dlsubseteq \cA$ for which $\aboxA_{d-1} \leadsto_{\{ \cB \dlsubseteq \cA \}} \aboxA_{d}$ holds, in each cases pinpointing how to construct an accepting run of $\aA_{\cA}$ starting from $\ia^{\interI}$. Correctness of our construction will be immediate, relying on the inductive hypothesis and the design of our~automata.

    \begin{enumerate}
        
        \item Suppose that $\cA$ is a concept name and $\cB = \top$. 
        To construct an accepting run of $\aA_\cA$ we can step from $q_0 \deff (\{ \top \}, \cA)$ to $q_1 \deff (\{ \top \}, \top)$, via a \ref{trans:sbus}-transition, which is already a final state. 
        Then the sequence $(\ia^{\interI}, q_0, \top?, q_1, \ia^{\interI})$ is clearly a run of $\aA_{\cA}$ starting from $\ia^{\interI}$.

        
        \item Suppose $\cA$ and $\cB$ are concept names with $\cB \preceq \cA$. Then, our derivation restricted to the first $d-1$ elements is a $(\tboxT, \ell)$-derivation of $\cB(\ia)$ from $\aboxA$ of length $d-1$.
        Hence, we can invoke the inductive hypothesis to obtain an accepting run $\rho \deff (\ia^{\interI}, \delta_1, \dd_2), \ldots, (\dd_{k-1}, \delta_{k}, \dd_{k})$ of $\aA_{\cB}$ starting from~$\ia^{\interI}$. 
        Note that this is also a run of $\cA$ due to $\cB \preceq \cA$.
        Applying the \ref{trans:sbus}-transition from $(\{\top\},\cA)$ to $(\{\top\},\cB)$ and then following run $\rho$ from there yields the desired run $(\ia^{\interI}, (\{\top\},\cA), \top?, (\{\top\},\cB), \ia^{\interI}), \rho$ of $\aA_{\cA}$ starting from~$\ia^{\interI}$.
                

        \item Suppose that $\cA = \bot$ or $\cA$ is a concept name with $\cB_1 \preceq \cA$ and $\cB_2 \prec \cA$, for concept names $\cB_1$ and $\cB_2$ where we have $\cB = \cB_1 \dland \cB_2$ (the proof for $\cB = \cB_2 \dland \cB_1$ is analogous).
        Then, our derivation restricted to the first $d-1$ elements is also a $(\tboxT, \ell)$-derivation of both $\cB_1(\ia)$ and $\cB_2(\ia)$ from $\aboxA$ of length $d-1$.
        Hence, we can invoke the inductive hypothesis to obtain an accepting run $\rho_1 \deff (\ia^{\interI}, \delta_0, \dd_1), \ldots, (\dd_{k-1}, \delta_{k-1}, \dd_{k})$ of $\aA_{\cB_1}$ starting from~$\ia^{\interI}$ and an accepting run $\rho_2$ of $\aA_{\cB_2}$ starting from~$\ia^{\interI}$.
        We construct the desired run of $\aA_{\cA}$ starting from $\ia^{\interI}$ as follows. We first apply an \ref{trans:aut}-transition from $q_0 \deff (\{\top\},\cA)$ to $q_1 \deff (\{\top, \cB_2\},\cA)$, followed by a \ref{trans:noc}-transition to reach $q_2 \deff (\{\top, \cB_2\},\cB_1)$, then apply \ref{trans:weak} to arrive at $q_3 \deff (\{\top\},\cB_1)$, from which we follow run $\rho_1$.
        Formally, the desired run of $\aA_{\cA}$ starting from~$\ia^{\interI}$ is 
          $(\ia^{\interI}, q_0, \aA_{\cB_2}?, q_1, \ia^{\interI}), (\ia^{\interI}, q_1, \top?, q_2, \ia^{\interI}), (\ia^{\interI}, q_2, \top?, q_3, \ia^{\interI}), \rho_1$.

        
    \item $\cA$ is a concept name (or $\cA = \bot$) and $\cB = \exists{\rs}.\cC$ for $\cC$ being either $\top$, $\cA$, or a concept name of height smaller than $\ell$. Hence, there exists an individual name $\ib$ such that $\rs(\ia,\ib) \in \aboxA_{d-1}$ and $\cC(\ib) \in \aboxA_{d-1}$.
    We distinguish two cases depending on whether $\ib$ is an individual name from $\aboxA$ or not.

    \begin{enumerate}
        \item[4a.] Suppose $\ib \in \indA$ (\ie $\ib$ is not anonymous); the other case will be discussed next.
      By the definition of a derivation, we also have $\rs(\ia,\ib) \in \aboxA$.
      Observe that our derivation restricted to the first $d-1$ elements is a $(\tboxT, \ell)$-derivation of $\cC(\ib)$ from $\aboxA$ of length $d{-}1$.
      Thus, by the inductive hypothesis, there exists an accepting run $\rho$ of $\aA_{\cC}$ starting from~$\ib^{\interI}$.
      To construct the desired run of $\aA_{\cA}$, we apply a \ref{trans:succ}-transition from $q_0 \deff (\{\top\},\cA)$ to $q_1 \deff (\{\top\},\cC)$, stepping via $\rs$ from $\ia^{\interI}$ to $\ib^{\interI}$, from where we follow run $\rho$.
      Note that all transitions from $\rho$ are also transitions of $\aA_{\cA}$, since $\cC$ is either $\top$, $\cA$, or a concept name of height lower than that of $\cA$, and thus the same transitions are present in both automata.
      Finally, the desired run of $\aA_{\cA}$ starting from~$\ia^{\interI}$ is simply
      $(\ia^{\interI}, q_0, \rs, q_1, \ib^{\interI}), \rho$.

            \item[4b.]
      Suppose $\ib \not\in \indA$ (\ie $\ib$ is anonymous).
By the definition of a derivation, there exists a GCI $\cD_1 \sqsubseteq \exists{\rs}.\cD_2$ that, when ``applied to'' $\ia^\interI$, triggered the creation of $\ib^\interI$ (more formally: there is an index $i$ such that $\aboxA_{i+1} = \aboxA_i \cup \{ \rs(\ia,\ib), \cD_2(\ib) \}$ and $\cD_1(\ia) \in \aboxA_i$).
We again distinguish two cases.
\begin{itemize}
  \item There is no index $j < d$ such that $\ia = \ia_j$ in the active query $\cA_j(\ia_j)$.
  Applying the first part of Lemma~\ref{lemma:main-prop-of-stratified-derivation} to our derivation, we infer $(\tboxT\rstr{\ell}, \aboxA_{d-1}\rstr{\ell-1}^{\ia}) \models \cA_{d-1}(\ia)$.
  To construct an accepting run of $\aA_\cA$ starting from $\ia^{\interI}$, it suffices to accumulate the concept names from $\aboxA_{d-1}\rstr{\ell-1}^{\ia}$ into the premise (via \ref{trans:aut} rules, which is possible due ot the height of these concepts), followed by an \ref{trans:anon}-transition that simulates the classical entailment.
  Concretely, let $\cE_1, \cE_2, \ldots, \cE_m$ be an enumeration of the concepts in $\aboxA_{d-1}\rstr{\ell-1}^{\ia}$, and define $q_i \deff (\{ \top \} \cup \{ \cE_1, \ldots, \cE_i\},  \cA )$ for $0 \leq i \leq m$. The desired accepting run of $\aA_\cA$ starting from $\ia^{\interI}$ is then $\rho \deff \rho_1, \ldots, \rho_{m+2}$, where: $\rho_i \deff (\ia^{\interI}, q_{i-1}, \aA_{\cE_i}?, q_i, \ia^{\interI})$ for all $1 \leq i \leq m$, $\rho_{m+1} \deff (\ia^{\interI}, q_m, \top?, q_{m+1}, \ia^{\interI})$,
  where $q_{m+1}$ is obtained from $q_m$ by replacing its goal component with $ \top $, which creates a final state.
\item Otherwise, let $j < d$ be some index such that $\cA_j(\ia)$ is an active query.
Observe that the derivation restricted to its first $j$ elements is a $(\tboxT, \ell)$-derivation of $\cA_j(\ia)$ from $\aboxA$ of length less than $d$, so the inductive hypothesis yields an accepting run $\rho$ of $\aA_{\cA_j}$ starting from~$\ia^{\interI}$. 
Moreover, we can assume without loss of generality that $j$ is appropriate, such that applying the second part of Lemma~\ref{lemma:main-prop-of-stratified-derivation} to our derivation gives $(\tboxT\rstr{\ell}, \aboxA_{d-1}\rstr{\ell-1}^{\ia} \cup \{\cA_j(\ia)\}) \models \cA(\ia)$.
To construct an accepting run of $\aA_\cA$ starting from $\ia^{\interI}$, it therefore suffices to accumulate the concept names from $\aboxA_{d-1}\rstr{\ell-1}^{\ia}$ into the premise via \ref{trans:aut}-rules (which is possible by the height of these concepts), then apply a \ref{trans:anon}-transition simulating the classical entailment and replacing the goal $\cA$ with $\cA_j$, then discharge all premises via a \ref{trans:weak}-transition, and finally follow $\rho$.
Formally, let $\cE_1, \cE_2, \ldots, \cE_m$ enumerate the concepts in $\aboxA_{d-1}\rstr{\ell-1}^{\ia}$, and set $q_i \deff (\{ \top \} \cup \{ \cE_1, \ldots, \cE_i\},\, \cA)$ for $0 \leq i \leq m$.
The desired accepting run of $\aA_\cA$ starting from $\ia^{\interI}$ is then $\rho' \deff \rho_1', \ldots, \rho_{m+3}', \rho$, where $\rho_i' \deff (\ia^{\interI}, q_{i-1}, \aA_{\cE_i}?, q_i, \ia^{\interI})$ for $1 \leq i \leq m$, and $\rho_{m+1}' \deff (\ia^{\interI}, q_m, \top?, q_{m+1}, \ia^{\interI})$ and $\rho_{m+2}' \deff (\ia^{\interI}, q_{m+1}, \top?, q_{m+2}, \ia^{\interI})$, where $q_{m+1}$ is obtained from $q_m$ by replacing its goal component with $ \cA_j$, and $q_{m+2}$ is obtained from $q_{m+1}$ by resetting its promise component to $\{ \top \}$, which gives us $\rho_{m+3}' \deff (\ia^{\interI}, q_{m+2}, \top?, (\{\top\},\{\cA_j\}), \ia^{\interI})$.
    \end{itemize}
    \end{enumerate}
    \end{enumerate}
    As we exhausted all the cases, we are done with the inductive step. This concludes the proof.
\end{proof}

%% file: sections/appendix-pspace.tex
\section{Appendix on Combined Complexity}

In our reduction, we employ the following lemma.
\begin{lemma}[{\cite[{Theorem 6.1}]{Schaefer78}}]
    \textsc{Quantified Boolean Formula} (\textsc{QBF}), i.e. given a propositional logic formula with quantifiers, is the formula valid, is \textsf{PSPACE}-hard, even if the formula is in prenex-3-DNF.
\end{lemma}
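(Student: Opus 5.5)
The plan is to reduce from the standard $\PSpace$-complete problem of deciding validity of an arbitrary quantified Boolean formula (Stockmeyer and Meyer), in two stages. First I normalise the matrix into 3-CNF. Then I dualise by negation to reach prenex 3-DNF. Since $\PSpace$ is closed under complement, hardness transfers through the negation step.

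\emph{Step 1 (prenex form).} Given any QBF, I push quantifiers outward after renaming bound variables apart. This yields an equivalent prenex formula $Q_1x_1\cdots Q_nx_n\,\chi$ of linear size, with $\chi$ quantifier-free.

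\emph{Step 2 (3-CNF via Tseitin).} I introduce a fresh variable $y_g$ for each subformula $g$ of $\chi$ and write the usual defining clauses, each with at most three literals. For example, for $g = g_1 \wedge g_2$ the clauses are $(\neg y_g \vee y_{g_1})$, $(\neg y_g \vee y_{g_2})$ and $(y_g \vee \neg y_{g_1} \vee \neg y_{g_2})$. I add the unit clause $y_\chi$ and pad short clauses by repeating literals. The new variables are quantified existentially and placed innermost.

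The key claim is that for every assignment to $x_1,\dots,x_n$, the value of $\chi$ equals the value of $\exists\bar y\,\chi'$, where $\chi'$ is the Tseitin CNF. This holds because the $y$-values are uniquely forced by the $x$-values. Hence $Q_1x_1\cdots Q_nx_n\,\exists\bar y\,\chi'$ is equivalent to the original formula. This shows that validity of prenex 3-CNF QBFs is $\PSpace$-hard.

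\emph{Step 3 (dualisation).} For a prenex 3-CNF formula $\varphi$, I form $\neg\varphi$ and push the negation inward. This flips every quantifier and, by De Morgan, turns the 3-CNF matrix into a 3-DNF whose disjuncts are conjunctions of three literals. The result $\tilde\varphi$ is in prenex 3-DNF and is computable in logarithmic space. Moreover, $\varphi$ is valid if and only if $\tilde\varphi$ is not valid.

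This gives a reduction from 3-CNF QBF validity to the complement of 3-DNF QBF validity. Therefore the complement of 3-DNF validity is $\PSpace$-hard. Since $\PSpace = \mathrm{co}\PSpace$, 3-DNF validity is itself $\PSpace$-hard.

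The main obstacle is the correctness argument in Step 2. One has to check that existentially quantifying the Tseitin variables innermost preserves truth under every outer quantifier prefix, and not merely satisfiability. I would prove this by showing that for each fixed assignment of $\bar x$ there is exactly one satisfying extension to $\bar y$ when $\chi$ is true, and none when $\chi$ is false. Equivalence under any prefix then follows by induction on the prefix. Note that after negation these auxiliary variables become innermost universals, which is harmless since the 3-DNF format places no restriction on the quantifier prefix.
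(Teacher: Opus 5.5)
Your argument is correct, but it takes a different route from the paper. The paper gives no proof of this lemma and simply cites Schaefer's Theorem~6.1, i.e.\ Schaefer's classification of quantified generalised satisfiability, on whose hard side quantified 3-CNF lies. You instead give a self-contained reduction from the Stockmeyer--Meyer $\PSpace$-completeness of QBF. Your first step is Tseitin's transformation, with the auxiliary variables quantified existentially and innermost. This is sound because $\exists\bar y\,\chi'$ and $\chi$ define the same Boolean function of $\bar x$, so equivalence under an arbitrary prefix is just replacement of equivalents and needs no separate induction on the prefix. You then dualise and use closure of $\PSpace$ under complement. The citation buys brevity and a much more general result. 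However, Schaefer's quantified result concerns formulas whose matrix is a conjunction of constraints, i.e.\ CNF-like matrices. Reaching the prenex 3-DNF form used in the paper's reduction therefore still requires a negation step like yours, which the paper leaves implicit; your route makes that step explicit and needs only elementary tools. As a minor point, your Step~1 is unnecessary if you start, as is standard, from prenex QBF; the linear-size prenexing you describe is only immediate for matrices over $\wedge,\vee,\neg$.
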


\begin{lemma}
    The following problem is \textsf{PSPACE}-hard (even if $\preceq$ is a linear ordering on the concept names):
    \begin{align*}
        \text{Given: } & \text{ An $\mathcal{ELI}_\preceq$-TBox $\tboxT$, an ABox $\aboxA$ with a single individual name $\ia$ and a concept name $\cD$}\\
        \text{Question: } & \text{ Does $(\aboxA,\tboxT) \models \cD(\ia)$ hold?}
    \end{align*}
\end{lemma}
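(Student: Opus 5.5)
We reduce from QBF in prenex 3-DNF, which is PSPACE-hard by the lemma of Schaefer quoted above, and we reuse the TBox $\tboxT$, the ABox $\aboxA=\{\cL_0(\ia)\}$ and the query concept $\cD \deff \cC_0^{\text{True}}$ from Section~\ref{sec:combined-complexity}. The reduction is clearly polynomial: $\tboxT$ has $O(n+m\cdot n)$ GCIs, and the listed order on concept and role names is linear. Throughout we read the unindexed roles $r_0,r_1$ in the $\cC^{\text{True}}$-propagation axioms as $\rr_{0,i+1},\rr_{1,i+1}$.

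\textbf{Step 1: $\tboxT$ is a stratified $\ourEL$-TBox.} We first check that the linear order satisfies Definition~\ref{def:stratified-kb}, axiom group by axiom group.
\begin{itemize}
\item For $\cL_i\dlsubseteq\exists\rr_{d,i+1}.\cL_{i+1}$ we have $\cL_i\prec\rr_{d,i+1}$ and $\cL_i\prec\cL_{i+1}$.
\item For $\exists\overline{\rr}_{d,i+1}.\cL_i\dlsubseteq\cX^d_{i+1}$ the definition requires $\rs\preceq\cD$, that is $\rr_{d,i+1}\preceq\cL_i$. This \emph{fails} for the listed order. I would repair it in one of two ways. The first is to add a fresh concept name $\cL'_i$ with $\cL_i\dlsubseteq\cL'_{i}$ placed appropriately. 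The second, more cleanly, is to replace the axiom by $\cL_i\dlsubseteq\exists\rr_{d,i+1}.(\cL_{i+1}\dland\cX^d_{i+1})$ and normalise it via a fresh name $\cL^d_{i+1}$, adding $\cL^d_{i+1}\dlsubseteq\cL_{i+1}$ and $\cL^d_{i+1}\dlsubseteq\cX^d_{i+1}$. This keeps all chains increasing and removes the need for inverse roles there.
\item Similarly, the copying axioms $\exists\overline{\rr}_{d,i}.\cX^b_j\dlsubseteq\cX^b_j$ need $\rr_{d,i}\preceq\cX^b_j$. This holds only when $\rr_{d,i}\prec\cX^b_j$, i.e.\ for $j\ge i$. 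I will therefore index these axioms so that they copy $\cX^b_j$ down from level $i-1$ to level $i$ for every $j<i$, with $\cX^b_j$ placed above all $\rr$'s, which is the case $\cD=\cB$ allowed by the definition.
\item The conjunction axioms over $\cL_n,\cX,\cA_j^1,\cA_j^2$ have the left conjunct strictly below the right-hand side, as listed.
\item For the truth propagation axioms $\exists\rr_{d,i+1}.\cC^{\text{True}}_{i+1}\dlsubseteq\cC^{\text{True},d}_i$ we have $\rr\prec\cC^{\text{True}}_{i+1}\prec\cC^{\text{True},d}_i$.
\item The universal conjunction $\cC^{\text{True},0}_i\dland\cC^{\text{True},1}_i\dlsubseteq\cC^{\text{True}}_i$ has both conjuncts strictly below $\cC^{\text{True}}_i$.
\end{itemize}
So, after these local adjustments to the order and axioms, $\tboxT$ is stratified with a linear $\preceq$.

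\textbf{Step 2: reduce to the canonical model.} By Lemma~\ref{lemma:folklore}, entailment coincides with $\tboxT$-derivability. The chase of $(\aboxA,\tboxT)$ is the full binary tree of depth $n$. Each node at level $i$ corresponds to a partial assignment to $x_1,\dots,x_i$: it carries $\cX^b_k$ exactly for the bits $b$ chosen on its path, for $k\le i$, and carries no other $\cX$ concepts. This tree is a model, since no axiom creates further structure and no $\bot$ occurs. Hence entailment equals membership of $\ia$ in $\cC_0^{\text{True}}$ in this minimal model.

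\textbf{Step 3: correctness by induction from the leaves upward.} We prove by downward induction on $i$ that a level-$i$ node $v$ with assignment $\sigma$ satisfies $\cC^{\text{True}}_i$ iff $Q_{i+1}x_{i+1}\cdots Q_nx_n\psi$ is true under $\sigma$.
\begin{itemize}
\item At $i=n$, the chain $\cL_n\to\cA^1_j\to\cA^2_j\to\cC^{\text{True}}_n$ fires exactly when the three literals of some disjunct $j$ are true under $\sigma$.
\item For $i<n$, the node $v$ gets $\cC^{\text{True},d}_i$ iff its $d$-child satisfies $\cC^{\text{True}}_{i+1}$. Then $\cC^{\text{True}}_i$ follows by disjunction of the two (existential case) or by conjunction (universal case).
\end{itemize}
Minimality of the chase ensures the ``only if'' direction.

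\textbf{Main obstacle.} The delicate part is Step~1: the paper's listed order as written conflicts with the clause $\rs\preceq\cD$ for the inverse-role propagation axioms. I expect the fix to require rewriting the value-marking and value-copying axioms as above, so that every value concept sits above the roles it is read through, while keeping the order linear.
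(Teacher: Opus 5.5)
Your proposal is correct. It uses the same reduction as the paper and the same downward induction over the levels of the canonical binary tree. The difference is that you actually verify the stratification conditions, which the paper only asserts by listing an order, and your diagnosis there is right.

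\textbf{Groups (1) and (2).} Axiom group (1) forces $\cL_i\preceq\rr_{d,i+1}$, while $\exists\overline{\rr}_{d,i+1}.\cL_i\dlsubseteq\cX^d_{i+1}$ requires $\rr_{d,i+1}\preceq\cL_i$. So the listed order fails, and any order satisfying both must make $\cL_i$ and $\rr_{d,i+1}$ equivalent. Either of your repairs works.

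\textbf{The copy axioms.} Your reindexing is needed for a reason you do not state. The paper's $\exists\overline{\rr}_{d,i}.\cX^b_i\dlsubseteq\cX^b_i$ is vacuous in the canonical model. The $\rr_{d,i}$-predecessor of a level-$i$ node lies on level $i-1$, and that node never carries $\cX^b_i$. Without copying $\cX^b_j$ for $j<i$, the leaves would carry only $\cX^{\sigma_n}_n$. The observation that node $\sigma_1\cdots\sigma_i$ carries exactly $\cX^{\sigma_1}_1,\ldots,\cX^{\sigma_i}_i$ (your Step~2, and the paper's) would then fail. Say explicitly that this is a correctness fix, not only an ordering fix.

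\textbf{A slip in Step~1.} The listed order on the truth concepts is off by one: $\cC^{\text{True}}_i\prec\cC^{\text{True},0}_i\prec\cC^{\text{True},1}_i$.
\begin{itemize}
\item Your claim that $\cC^{\text{True},0}_i\dland\cC^{\text{True},1}_i\dlsubseteq\cC^{\text{True}}_i$ has both conjuncts strictly below the right-hand side is false for that order.
\item The existential-case axioms $\cC^{\text{True},d}_i\dlsubseteq\cC^{\text{True}}_i$, which you did not check, violate it too.
\end{itemize}
Reorder to $\cC^{\text{True}}_{i+1}\prec\cC^{\text{True},0}_i\prec\cC^{\text{True},1}_i\prec\cC^{\text{True}}_i$. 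Then write out the final combined order once, for instance
$\cL_0\prec\rr_{0,1}\prec\rr_{1,1}\prec\cL^0_1\prec\cL^1_1\prec\cL_1\prec\cdots\prec\cL_n\prec\cX^0_1\prec\cX^1_1\prec\cdots\prec\cX^1_n\prec\cA^1_1\prec\cA^2_1\prec\cdots\prec\cA^2_m\prec\cC^{\text{True}}_n\prec\cC^{\text{True},0}_{n-1}\prec\cC^{\text{True},1}_{n-1}\prec\cC^{\text{True}}_{n-1}\prec\cdots\prec\cC^{\text{True}}_0$.
Check every axiom group against it simultaneously, since each of your fixes moves names. With these changes your argument is complete, and it is more careful than the paper's own.
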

\begin{proof}
    Consider the formulas $\varphi$, $\psi$, the ABox $\aboxA$, the TBox $\tboxT$, the query $\cC^{\text{True}}_0(\ia)$, the set $\Clang$ of concept names, the set $\Rlang$ of role names, and the ordering $\preceq$, all as defined in the main body of the paper.  
    We claim that $(\aboxA,\tboxT) \models \cC_{0}^{\text{True}}(\ia)$ if and only if $\varphi \equiv \textsc{True}$. Since making a copy per level of the roles $\rr_0$ and $\rr_1$ is done only to obey the ordering of the concept-/role-names, we will treat all copies of $\rr_0$ and $\rr_1$, respectively, the same.

    We make the following observations: The universal model of $(\aboxA,\tboxT)$ has the form of a binary tree of depth $n$ with $\ia$ as its root, since (1) enforces an $\rr_0$ and an $\rr_1$ child per node per level. Thus any node on any level $i \in \{0,\ldots,n\}$ can be identified by one string from $\overline{\sigma}\in\{0,1\}^j$ by matching the path $\ia \xrightarrow[]{\rr_{\sigma_1}}\circ\xrightarrow[]{\rr_{\sigma_2}}\cdots \xrightarrow[]{\rr_{\sigma_j}} \circ$ to the string $\sigma_1\sigma_2 \ldots \sigma_i \in \{0,1\}^i$ (the root $\ia$ is identified by the empty string~$\varepsilon$). Furthermore, among the ``assignment concepts'' $\{\cX_i^0, \cX_i^1 \mid i \in \{ 0, 1, \ldots, n \}\}$, the node $\sigma_1\sigma_2 \ldots \sigma_i$ satisfies exactly $\{\cX_1^{\sigma_1}, \cX_2^{\sigma_2},\ldots, X_i^{\sigma_i}\}$.
    \newcommand{\I}{\mathfrak I}
    \begin{cl}
        Every node $\sigma_1\ldots\sigma_i \in \{0,1\}^{\leq n}$ in the universal model of $(\aboxA,\tboxT)$ satisfies $\cC_i^{\mathrm{True}}$ iff. $\I_{\overline{\sigma}} \coloneqq\{x_j \mapsto \sigma_j \mid j \in [i]\} \models Q_{i+1}x_{i+1}\ldots Q_nx_n\psi(x_1,\ldots,x_i)$.
    \end{cl}
    \begin{proof}
        We show the claim via induction on $n,\ldots,0$: At level $n$, $\overline{\sigma} \coloneqq\sigma_1\ldots\sigma_n$ satisfies $\cC_n^{\mathrm{True}}$ by definition iff. there is a monomial $\lambda_{j1} \land \lambda_{j2} \land \lambda_{j3}$ s.t. $\overline{\sigma}$ satisfies $\cA_j$ and $\Lambda_{j1}$\footnote{ Say that a node satisfies $\Lambda_{jk}$ if and only if $\lambda_{jk} = x_i$ (respectively $\lambda_{jk} = \neg x_i$) and the node satisfies $\cX^1_{i}$ (respectively $\cX^0_i$).}, which is the case iff. $\overline{\sigma}$ satisfies $\Lambda_{j3}$, $\Lambda_{j2}$, and $\cA_j^1$, which is the case iff $\overline{\sigma}$ satisfies $\Lambda_{j3}$, $\Lambda_{j2}$, $\Lambda_{j2}$, and $\cL_n$ (the latter one can be taken for granted). So this is the case iff $\I_{\overline{\sigma}} \models \psi$.
        
        Assume now the claim already holds for level $i+1$, then a node $\overline{\sigma} \coloneqq\sigma_1\ldots\sigma_i$ satisfies $\cC_{i}^{\text{True},\sigma_{i+1}}$ for $\sigma_{i+1} \in \{0,1\}$  iff $\overline{\sigma}\sigma_{i+1}$ satisfies $\cC^\mathrm{True}_{i+1}$ iff $\I_{\overline{\sigma}\sigma_{i+1}} \models Q_{i+2}x_{i+2}\ldots Q_nx_n\psi(x_1,\ldots,x_{i+1})$. If $Q_{i+1} = \exists$, then $\overline{\sigma}$ satisfies $\cC_i^{\mathrm{True}}$ if it satisfies $\cC_{i}^{\text{True},0}$ or $\cC_{i}^{\text{True},1}$ and in the case of $Q_{i+1} = \forall$ only if it satisfies both, in either case completing the inductive step.
    \end{proof}
    This claim applied to the root now demonstrates our Lemma.
\end{proof}